\newtheorem{theorem}{Theorem}
\newtheorem{definition}{Definition}
\newtheorem{proposition}[theorem]{Proposition}
\newtheorem{remark}{Remark}
\newtheorem{corollary}[theorem]{Corollary}
\DeclareMathOperator*{\argmin}{arg\,min}
\newcommand{\range}[1]{\llbracket #1 \rrbracket}
\begin{document}
\title{Non-Stochastic Private Function Evaluation} 

\author{Farhad Farokhi and Girish Nair
	\thanks{The authors are with the Department of Electrical and Electronic Engineering at the University of Melbourne. e-mails:\{ffarokhi,gnair\}@unimelb.edu.au }
	\thanks{The work of F. Farokhi is supported by an startup grant at Melbourne School of Engineering at the University of Melbourne. The work of G. Nair was supported by the Australian Research Council  grant FT140100527.}
}

\maketitle

\begin{abstract} We consider private function evaluation to provide query responses based on private data of multiple untrusted entities in such a way that each cannot learn something substantially new about the data of others. First, we introduce perfect non-stochastic privacy in a two-party scenario. Perfect privacy amounts to conditional unrelatedness of the query response and the private uncertain variable of other individuals conditioned on the uncertain variable of a given entity. We show that perfect privacy can be achieved for queries that are functions of the common uncertain variable, a generalization of the common random variable. We compute the closest approximation of the queries that do not take this form. To provide a trade-off between privacy and utility, we relax the notion of perfect privacy. We define almost perfect privacy and show that this new definition equates to using conditional disassociation instead of conditional unrelatedness in the definition of perfect privacy. Then, we generalize the definitions to multi-party function evaluation (more than two data entities). We prove that uniform quantization of query responses, where the quantization resolution is a function of privacy budget and sensitivity of the query (cf., differential privacy), achieves function evaluation privacy. 
\end{abstract}

\begin{IEEEkeywords}
Non-Stochastic Privacy; Common Uncertain Variable; Information Leakage; Quantization.
\end{IEEEkeywords}

\section{Introduction}
Privacy research in information theory~\cite{du2012privacy,issa2016operational} and computer science~\cite{dwork2006calibrating,10100797835407922841} often deals with the problem of reporting privacy-preserving query responses based on private data available on a secure server. That is, when computing the privacy-preserving responses, the server has access to the entire private dataset and generates a noisy output with desired utility and privacy against a third-party adversary. Those studies fail to investigate private information leakage from the query responses to individuals whose data is used for responding to the query, which amounts to privacy analysis in the presence of side-channel information.  This is an important problem when multiple untrusted parties must come together to compute the response to an aggregate query.

In this work, we consider providing query responses based on the data of multiple untrusted entities in such a way that they cannot learn something substantially new about the data of others. We refer to this as \textit{private function evaluation}. At first, we introduce and investigate perfect privacy in a two-party scenario. This is  the non-stochastic counterpart of perfect privacy in stochastic literature~\cite{miklau2007formal,calmon2015fundamental, rassouli2018perfect}. Perfect privacy equates to conditional unrelatedness of the query response and the private uncertain variable of other individuals conditioned on the uncertain variable of a given entity. We show that perfect privacy can be achieved for queries that are functions of the common uncertain variable, a generalization of the common random variable in the sense of~\cite{wolf2004zero}. For queries that do not take this form, we can approximate the query to compute the response with the most utility, i.e., closest worst-case response. To provide a trade-off between privacy and utility, we relax the notion of privacy. We define almost perfect non-stochastic privacy and show that this new definition equates to using conditional disassociation, borrowed from~\cite{rangi2019towards}, instead of conditional unrelatedness in the definition of perfect privacy. We investigate the family of functions that can achieve almost perfect privacy. Then, we generalize our definition to multi-party function evaluation (more than two data entities). We prove that private function evaluation can be achieved by uniform quantization of the query responses, where the quantization resolution is a function of privacy budget and sensitivity of the query (cf., scale of the Laplace mechanism differential privacy~\cite{10100797835407922841}). We also investigate the utility of these reporting policies to establish utility-privacy trade-off.

Private function evaluation, at first glance, might seem related to private computation~\cite{kim2019private,8849287, tahmasebi2019private} and private information retrieval~\cite{lin2018asymmetry,banawan2018private}. However, in private function evaluation, the objective is not to hide the function to be computed (cf., private computation) or the datasets on which the function is evaluated (cf., private information retrieval). We are rather interested in ensuring  that the parties contributing to the private dataset cannot infer the private information of the other parties. This links the problem more intimately to privacy-preserving distributed decision making~\cite{chaudhuri2009privacy,abadi2016deep, wu2019value,wainwright2012privacy}. However, all those papers consider stochastic mechanisms for ensuring privacy while, in this paper, we are interested in a non-stochastic notion of information leakage and privacy. 

Despite their shortcomings, due to heuristic-based development, non-stochastic privacy-preserving policies have remained popular~\cite{samarati2001protecting,sweeney2002k,1617392}.
Those studies are motivated by concerns about the use of randomization in popular stochastic approaches. For instance, randomized policies, stemming from differential privacy in financial auditing, can potentially complicate fraud detection~\cite{bhaskar2011noiseless, nabar2006towards} and cause difficulties in medical, health, or social research~\cite{dankar2013practicing,Mervis114}. This has motivated the use of information-theoretic tools to investigate non-stochastic privacy recently~\cite{farokhinoiselessprivacy,8662687,ding2019developing,Nonstochastichypothesis2020}. Nonetheless, those studies do not consider the private function computation setup in this paper.  

\section{Uncertain Variables}
In this section, we present necessary preliminaries from non-stochastic information theory from~\cite{nair2013nonstochastic,8662687}.

A sample space $\Omega$ models the set of uncertainties. An uncertain variable is a mapping from the sample space to a set of interest, such as $X:\Omega\rightarrow\mathbb{X}$. Here, $X(\omega)$ denotes a realization of uncertain variable $X$ corresponding to sample $\omega\in\Omega$. If $\mathbb{X}$ is finite, the uncertain variable $X$ is discrete. In this paper, we focus on discrete uncertain variables. The range of an  uncertain variable $X$ is the set of all its realizations, e.g.,  $\range{X}:=\{X(\omega):\omega\in\Omega \}\subseteq \mathbb{X}.$ The joint range of any two uncertain variables $X:\Omega\rightarrow\mathbb{X}$ and $Y:\Omega\rightarrow\mathbb{Y}$ is $\range{X,Y}:=\{(X(\omega), Y(\omega)):\omega\in\Omega \}\subseteq \mathbb{X}\times \mathbb{Y}.$ The conditional range of uncertain variable $X$, conditioned on a realization of another uncertain variable $Y(\omega)=y$, is  $\range{X|Y(\omega)=y}:=\{X(\omega):\exists \omega\in\Omega  \mbox{ such that } Y(\omega)=y\}=X(Y^{-1}(y))\subseteq \range{X}$, where $Y^{-1}$ denotes the pre-image or inverse image of $Y$. Whenever it is evident from the context, $\range{X|y}$ is used instead of $\range{X|Y(\omega)=y}$ for the sake of brevity.

Uncertain variables $X_1$, $\dots$, $X_n$ are unrelated if $\range{X_1,\dots,X_n}=\range{X_1}\times \cdots \times \range{X_n}.$ Further, they are  conditionally unrelated (conditioned on observations of $Y$) if 
$\range{X_1,\dots,X_n|Y(\omega)=y}=\range{X_1|Y(\omega)=y}\times \cdots \times \range{X_n|Y(\omega)=y}$ for all $y\in\range{Y}$.
For two uncertain variables, this definition is equivalent to stating that $X_1$ and $X_2$ are unrelated if $\range{X_1|X_2(\omega)=x_2}=\range{X_1}, \forall x_2\in\range{X_2},$ and \textit{vice versa}~\cite{nair2013nonstochastic}. 
Again, for two uncertain variables, $X_1$ and $X_2$ are conditionally unrelated (conditioned on realizations of $Y$) if
$\range{X_1|X_2(\omega)=x_2,Y(\omega)=y}=\range{X_1|Y(\omega)=y},\forall (x_2,y)\in\range{X_2,Y}$~\cite{nair2013nonstochastic}.
 
The non-stochastic entropy of uncertain variable $X$ is 
\begin{align} \label{eqn:firstentropys}
H_0(X):=\log_2(|\range{X}|).
\end{align}
This is commonly referred to as the Hartley entropy~\cite{hartley1928transmission,nair2013nonstochastic} and coincides with the R\'{e}nyi entropy of order $0$ for discrete variables, see, e.g.,~\cite[Definition~1]{sason2017arimoto} and~\cite{renyi1961measures}. Conditional (or relative) entropy of uncertain variable $X$ given uncertain variable $Y$ is
\begin{align} \label{eqn:relativeentropys}
H_0(X|Y):=\max_{y\in\range{Y}}\log_2(|\range{X|Y(\omega)=y}|).
\end{align}
This coincides with the Arimoto-R\'{e}nyi conditional entropy of order $0$, see, e.g.,~\cite[Definition~3]{sason2017arimoto} and~\cite{10022581674}. So, the non-stochastic information between two uncertain variables $X$ and $Y$ can be defined as the difference of the entropy of $X$ with and without access to realizations of $Y$:
\begin{align}
I_0(X;Y):=&H_0(X)-H_0(X|Y)\nonumber\\
=&\min_{y\in \range{Y} } \log_2\left(\frac{|\range{X}|}{|\range{X|y}|} \right).\label{eqn:information1}
\end{align} 
Zeroth order information, defined above, is not symmetric, i.e., $I_0(X;Y)\neq I_0(Y;X)$, in general. This notion of information is also related to Kolmogorov's information gain $|\range{X}|/|\range{X|y}|$ and the `combinatorial' conditional entropy $\log_2(|\range{X|y}|)$~\cite{kolmogorov1959varepsilon}. However, the combinatorial conditional entropy and the information gain are defined for a given realization $Y(\omega) = y$ while the non-stochastic information is for the worst-case realization. 

In~\cite{8662687}, it was observed that,  in the context of information-theoretic privacy, $I_0$ is not an appropriate measure of information leakage. This is because $I_0$ focuses on least informative observations while a privacy-intrusive adversary is interested in most informative realizations. Therefore, in~\cite{8662687}, an alternative non-stochastic information leakage is proposed as
\begin{align}
L_0(X;Y):=&\max_{y\in \range{Y} } \log_2\left(\frac{|\range{X}|}{|\range{X|y}|} \right).\label{eqn:information2}
\end{align}
Non-stochastic information leakage $L_0(X;Y)$ captures the worst-case reduction in the complexity of brute-force guessing  $X$ after observing $Y$~\cite{farokhi2020measuring}. In general, $I_0$ and non-stochastic information leakage $L_0$ are not equal, i.e.,  $I_0(X;Y)\neq L_0(Y;X)$. In fact, it is evident that $I_0(X;Y)\leq L_0(X;Y)$. Again, $L_0(X;Y)$ is not symmetric. Note that $L_0(X;Y)\geq 0$ with equality achieved if and only if $X$ and $Y$ are unrelated. Finally, we can similarly define conditional non-stochastic information leakage:
\begin{align}
L_0(X&;Y|Z)\nonumber\\
:=&\max_{(y,z)\in \range{Y,Z} } \log_2\left(\frac{|\range{X|Z(\omega)=z}|}{|\range{X|Y(\omega)=y,Z(\omega)=z}|} \right).
\label{eqn:conditional_non_stochastic_information_leakage}
\end{align}
Note that $L_0(Y;X|Z)\geq 0$ with equality achieved if and only if $X$ and $Y$ are unrelated conditioned on $Z$ .

In~\cite{nair2013nonstochastic}, maximin or non-stochastic information is introduced as a symmetric measure of information and its relationship with zero-error capacity is explored. To present the definition of the maximin information, we need to introduce overlap partitions. 

\begin{definition}[Overlap Partition]\hfill\break\vspace{-1em}
	\begin{itemize}
		\item $x,x'\in\range{X}$ are overlap connected (via $\range{X|Y}$), $x\leftrightsquigarrow x'$,  if there exists a finite sequence of conditional ranges $\{\range{X|y_i}\}_{i=1}^n$ such that $x\in\range{X|y_1}$, $x'\in\range{X|y_n}$, and $\range{X|y_i}\cap \range{X|y_{i+1}}\neq \emptyset$ for all $i=1,\dots,n-1$;
		\item $\mathcal{A}\subseteq\range{X}$ is overlap connected if all $x,x'\in\mathcal{A}$ are overlap connected;
		\item $\mathcal{A},\mathcal{B}\subseteq\range{X}$ are overlap isolated if there do not exist points $x\in\mathcal{A}$ and $x'\in\mathcal{B}$ such that $x\leftrightsquigarrow x'$;
		\item An overlap partition of $\range{X}$ is a set of sets $\range{X|Y}_\star:=\{\mathcal{A}_i\}_{i=1}^n$ such that $\range{X}\subseteq\bigcup_{i=1}^n\mathcal{A}_i$, $\mathcal{A}_i,\mathcal{A}_j$ are  overlap isolated if $j\neq i$, and $\mathcal{A}_i$ is overlap connected;
	\end{itemize}
\end{definition}

There always exists a unique overlap partition~\cite{nair2013nonstochastic}. The maximin information is
\begin{align}
I_\star(X;Y):=\log_2(|\range{X|Y}_\star|).
\end{align}
Note that $I_\star(X;Y)\geq 0$ and $I_\star(X;Y)=0$ if and only if uncertain variables $X$ and $Y$ are unrelated.

\begin{definition}[Taxicab Connectivity]\hfill\break\vspace{-1em} \label{def:taxicab}
	\begin{itemize}
		\item $(x,y),(x',y')\in\range{X,Y}$ are taxicab connected if there exists a sequence of points $\{(x_i,y_i)\}_{i=1}^n\subseteq\range{X,Y}$ such that $(x_1,y_1)=(x,y)$, $(x_n,y_n)=(x',y')$, and either $x_i=x_{i-1}$ or $y_i=y_{i-1}$ for all $i\in\{2,\dots,n\}$;
		\item $\mathcal{A}\subseteq\range{X,Y}$ is taxicab connected if all points in $\range{X,Y}$ are taxicab connected;
		\item $\mathcal{A},\mathcal{B}\subseteq\range{X,Y}$ are taxicab isolated if there do not exist points $(x,y)\in\mathcal{A}$ and $(x',y')\in\mathcal{B}$ such that $(x,y)$ and $(x',y')$ are taxicab connected;
		\item A taxicab partition of $\range{X,Y}$ is a set of sets $\mathfrak{T}(X;Y):=\{\mathcal{A}_i\}_{i=1}^n$ such that $\range{X,Y}\subseteq\bigcup_{i=1}^n\mathcal{A}_i$, any  $\mathcal{A}_i,\mathcal{A}_j$ are  taxicab isolated if $j\neq i$, and $\mathcal{A}_i$ is taxicab connected.
	\end{itemize}
\end{definition}
Again,there exists a unique taxicab partition $\mathfrak{T}(X;Y)$~\cite{nair2013nonstochastic}. Furthermore, $\range{X|Y}_\star$ are $\range{Y|X}_\star$ are projections of the unique taxicab partition $\mathfrak{T}(X;Y)$. Hence, $|\range{X|Y}_\star|=|\mathfrak{T}(X;Y)|=|\range{Y|X}_\star|$ implying that the maximin information is symmetric, i.e.,  $I_\star(X;Y)=I_\star(Y;X)$. The maximin information is related to the non-stochastic information leakage $I_\star(X;Y)\leq L_0(X;Y)$~\cite{farokhinoiselessprivacy,9116825}. Due to symmetry of maximin information,  $I_\star(X;Y)=I_\star(Y;X)\leq L_0(Y;X)$. 

\section{Common Uncertain Variable, Information, and Perfect Privacy}
In this section, we first discuss extension of common random variables in~\cite{wolf2004zero} to uncertain variables in line with the approach of~\cite{nair2013nonstochastic,mahajanrelationship} for developing privacy-preserving policies. This extends the use of common information~\cite{shannon1953lattice}, also known as the G{\'a}cs-K{\"o}rner common information~\cite{GacsKorner1973}, in perfect privacy~\cite{asoodeh2014notes} to the non-stochastic framework. 

\subsection{Common Uncertain Variable and Information}

We start by introducing the notion of common uncertain variables and relating it to overlap partitions and maximin information. 

\begin{definition}[Common Uncertain Variable] 
	\label{def:common}
Let $X_1$ and $X_2$ be any two uncertain variables with disjoint\footnote{The disjoint assumption is just to simplify definition of the bipartite graph by making vertexes associated with alphabet of $X_1$ and $X_2$ distinguishable. This is clearly without loss of generality as changing the event sets/alphabets of uncertain variables does not change their properties.} ranges. 
\begin{itemize}
\item $\mathcal{G}$ is a bipartite graph with the vertex set $\mathcal{V}=\range{X_1}\cup\range{X_2}$ and the edge set $\mathcal{E}=\range{X_1,X_2}$;
\item $f_1:\range{X_1}\rightarrow 2^{\range{X_1}\cup\range{X_2}}$ is a function that maps $x_1\in\range{X_1}\subseteq \mathcal{V}$ to the set of vertices in $\range{X_1}\cup\range{X_2}$ that are in the connected component of $\mathcal{G}$ containing $x_1$;
\item $f_2:\range{X_2}\rightarrow 2^{\range{X_1}\cup\range{X_2}}$ is a function that maps $x_2\in\range{X_2}\subseteq \mathcal{V}$ to the set of vertices in $\range{X_1}\cup\range{X_2}$ that are in the connected component of $\mathcal{G}$ containing $x_2$.
\end{itemize}
The common uncertain variable is $X_1\wedge X_2=f_1\circ X_1=f_2\circ X_2$.
\end{definition}

Similar to~\cite{wolf2004zero}, we should note that the common uncertain variable $X_1\wedge X_2$ is the ``largest'' uncertain variable that can be extracted from uncertain variables $X_1$ and $X_2$. 

\begin{proposition} \label{prop:largest_uv} Assume that uncertain variables $X_1$, $X_2$, and $C$ exist such that $C=\bar{f}_1\circ X_1=\bar{f}_2\circ X_2$ for functions $\bar{f}_1:\range{X_1}\rightarrow\range{C}$ and $\bar{f}_1:\range{X_2}\rightarrow\range{C}$. There exists $g:2^{\range{X_1}\cup\range{X_2}}\rightarrow \range{C}$ such that $C=g(X_1\wedge X_2)$. 
\end{proposition}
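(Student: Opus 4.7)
The plan is to exploit the fact that the common uncertain variable $X_1\wedge X_2$ encodes precisely the connected component of $\mathcal{G}$ that each realization falls into, and then to show that any $C$ simultaneously expressible as $\bar f_1\circ X_1$ and $\bar f_2\circ X_2$ must be constant on every such connected component. Once this constancy is established, $g$ can be defined on a component-by-component basis by picking any vertex in the component and returning its image under the appropriate $\bar f_i$.

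First I would record the \emph{edge compatibility} observation: whenever $(x_1,x_2)\in\range{X_1,X_2}$, there exists $\omega\in\Omega$ with $X_1(\omega)=x_1$ and $X_2(\omega)=x_2$, so
\[
\bar f_1(x_1)=\bar f_1(X_1(\omega))=C(\omega)=\bar f_2(X_2(\omega))=\bar f_2(x_2).
\]
Next I would promote this, by induction on path length in $\mathcal{G}$, to the statement that $\bar f_i(v)=\bar f_j(v')$ whenever two vertices $v\in\range{X_i}$ and $v'\in\range{X_j}$ lie in the same connected component. The base case is the edge compatibility above; the inductive step alternates sides of the bipartition along the path. I would also note that every vertex $x_1\in\range{X_1}$ sits on at least one edge (take $\omega$ realising $X_1(\omega)=x_1$ and consider the edge $(x_1,X_2(\omega))$), and symmetrically for $\range{X_2}$, so every connected component contains vertices of both colours.

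With this in hand, for each connected component $\mathcal{C}\in\range{X_1\wedge X_2}$ I would pick any vertex $v\in\mathcal{C}\cap\range{X_1}$ and define $g(\mathcal{C}):=\bar f_1(v)$; the constancy result makes this choice immaterial. To verify $C=g(X_1\wedge X_2)$, fix any $\omega\in\Omega$ and let $x_1:=X_1(\omega)$. By the construction of $f_1$ in Definition~\ref{def:common}, $X_1\wedge X_2(\omega)=f_1(x_1)$ is the connected component containing $x_1$, and the definition of $g$ together with $x_1\in\range{X_1}\cap f_1(x_1)$ gives $g(f_1(x_1))=\bar f_1(x_1)=C(\omega)$.

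The main obstacle is purely bookkeeping in the inductive step: one must be careful that paths in the bipartite graph $\mathcal{G}$ are realised by actual samples $\omega\in\Omega$ at each hop, so that the equality $\bar f_1(\cdot)=\bar f_2(\cdot)$ propagates along alternating edges rather than being assumed globally. Beyond this, the only other subtlety is confirming that $g$ is well-defined on the domain $\range{X_1\wedge X_2}$ rather than all of $2^{\range{X_1}\cup\range{X_2}}$, which is handled once components are shown to contain vertices of both colours.
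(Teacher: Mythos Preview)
Your argument is correct and is precisely the standard proof: show via edge compatibility that $\bar f_1$ and $\bar f_2$ agree along every edge of $\mathcal G$, propagate this along paths to obtain constancy on connected components, and then read off $g$ component-wise. The paper does not spell this out at all; its proof consists solely of a pointer to Lemma~1 of Wolf~\cite{wolf2004zero}, whose argument is exactly the one you have reconstructed. So you have supplied the details the paper omits, using the same underlying idea.
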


\begin{IEEEproof}
The proof follows the same line of reasoning as in the proof of Lemma 1 in~\cite{wolf2004zero}.
\end{IEEEproof}

For the unique overlap partition of $\range{X_1}$, $\range{X_1|X_2}_\star=\{\mathcal{A}_i\}_{i=1}^{n_1}$ with $n_1\in\mathbb{N}$, define $\mathfrak{i}_1:\range{X_1}\rightarrow \{1,\dots,n_1\}$ such that $\mathfrak{i}_1(x_1)=i$ for which $x_1\in\mathcal{A}_i$. Similarly, for the unique overlap partition of  $\range{X_2}$, $\range{X_2|X_1}_\star=\{\mathcal{B}_i\}_{i=1}^{n_2}$ with $n_2\in\mathbb{N}$, define $\mathfrak{i}_2:\range{X_2}\rightarrow \{1,\dots,n_2\}$ such that $\mathfrak{i}_2(x_2)=i$ for which $x_2\in\mathcal{B}_i$. The mappings $\mathfrak{i}_1$ and $\mathfrak{i}_1$ are well-defined because $\range{X_1|X_2}_\star$ and $\range{X_2|X_1}_\star$ are partitions. 

\begin{definition}[Equivalence]
	Two uncertain variables $X$ and $Y$ are equivalent, $X\equiv Y$, if there exists a one-to-one correspondence\footnote{The partitions of sample space $\Omega$ induced by $X$ and $Y$ are the same though their labeling may be different.} $f:\range{X}\rightarrow\range{Y}$ such that $Y=f\circ X$. 
\end{definition}

The notion of ``equivalence” between two uncertain variables is weaker than that of ``equality''. If two uncertain variables are equal, they are equivalent as well. However, if two uncertain variables are equivalent, they could differ in terms of their range and hence not being equal to each other. This concept is explored for random variables in~\cite{li2011connection}. Note that entropy remains invariant under the equivalence relationship. 

\begin{proposition}
$H_0(X)=H_0(Y)$ if $X\equiv Y$.
\end{proposition}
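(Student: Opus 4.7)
The plan is to reduce the claim to the trivial observation that a bijection preserves cardinality, combined with the definition of $H_0$ in terms of the size of the range.

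First, I would unpack the definition of equivalence: $X\equiv Y$ gives a one-to-one correspondence $f:\range{X}\to\range{Y}$ with $Y=f\circ X$. Because this correspondence is a bijection by assumption, $|\range{X}|=|\range{Y}|$ immediately. To make this self-contained rather than appealing only to the existence of a bijection between the two asserted ranges, I would additionally verify that $\range{Y}=f(\range{X})$: by the definition of the range and the pointwise relation $Y(\omega)=f(X(\omega))$, every element of $\range{Y}$ has the form $f(x)$ for some $x\in\range{X}$, and conversely every $f(x)$ with $x\in\range{X}$ is realized. Injectivity of $f$ on $\range{X}$ then yields $|\range{Y}|=|f(\range{X})|=|\range{X}|$.

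Finally, applying the defining equation \eqref{eqn:firstentropys},
\begin{equation*}
H_0(X)=\log_2|\range{X}|=\log_2|\range{Y}|=H_0(Y),
\end{equation*}
which is exactly the desired conclusion.

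There is no real obstacle here; the proposition is essentially a sanity check confirming that the Hartley entropy only depends on the partition of $\Omega$ induced by the uncertain variable and not on the labels used for its values. The only thing worth being careful about is not conflating ``equivalent'' with ``equal'', so I would state explicitly that the proof uses only bijectivity of $f$ and the fact that $Y=f\circ X$ forces $\range{Y}=f(\range{X})$.
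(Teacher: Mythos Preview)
Your proposal is correct and follows essentially the same approach as the paper: both arguments use the one-to-one correspondence $f$ from the definition of equivalence to conclude $|\range{Y}|=|\{f(X(\omega)):\omega\in\Omega\}|=|\range{X}|$, and then apply the definition $H_0(\cdot)=\log_2|\range{\cdot}|$. Your write-up is slightly more explicit in verifying $\range{Y}=f(\range{X})$ before invoking injectivity, but the underlying argument is identical.
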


\begin{proof} Since $X\equiv Y$, there must exist a one-to-one correspondence $f$ such that $Y=f\circ X$.  Note that $|\range{Y}|=|\{Y(\omega):\omega\in\Omega \}|=|\{f(X(\omega)):\omega\in\Omega \}|=|\{X(\omega):\omega\in\Omega \}|=|\range{X}|$, where the third equality follows from that $f$ is a one-to-one correspondence.
\end{proof}

\begin{proposition} \label{prop:cuv_maximin}
%Let $f'_1:\range{X_1}\rightarrow \mathfrak{T}(X_1;X_2)$ and $f'_2:\range{X_2}\rightarrow \mathfrak{T}(X_1;X_2)$ be such that $f'_1(x_1)=\mathcal{A}_{\mathfrak{i}_1(x_1)}$ and $f'_2(x_2)=\mathcal{A}_{\mathfrak{i}_2(x_2)}$. Then, 
$X_1\wedge X_2\equiv \mathfrak{i}_1\circ X_1\equiv \mathfrak{i}_2\circ X_2$ and $I_\star(X_1;X_2)=H_0(X_1\wedge X_2)$.
\end{proposition}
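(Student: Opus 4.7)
The plan is to establish a bijection between the connected components of the bipartite graph $\mathcal{G}$ and the elements of the overlap partition $\range{X_1|X_2}_\star$, and then to read off both claims from this bijection.

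First, I would show that for any $x,x' \in \range{X_1}$, they lie in the same connected component of $\mathcal{G}$ if and only if they lie in the same element of $\range{X_1|X_2}_\star$. For the ``if'' direction, suppose $x \leftrightsquigarrow x'$ via a sequence $\{\range{X_1|x_2^i}\}_{i=1}^n$. The overlap condition $\range{X_1|x_2^i}\cap\range{X_1|x_2^{i+1}}\neq\emptyset$ furnishes some $\tilde{x}^i\in\range{X_1}$ such that both $(\tilde{x}^i,x_2^i)$ and $(\tilde{x}^i,x_2^{i+1})$ lie in $\range{X_1,X_2}=\mathcal{E}$. Concatenating these, one obtains an alternating path $x-x_2^1-\tilde{x}^1-x_2^2-\cdots-\tilde{x}^{n-1}-x_2^n-x'$ in $\mathcal{G}$. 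Conversely, any path in $\mathcal{G}$ between $x,x'\in\range{X_1}$ must alternate between $\range{X_1}$ and $\range{X_2}$; its $\range{X_2}$-vertices $x_2^1,\dots,x_2^m$ are witnessed by intermediate $\range{X_1}$-vertices lying in consecutive conditional ranges, so $\range{X_1|x_2^i}\cap\range{X_1|x_2^{i+1}}\neq\emptyset$ and $x\leftrightsquigarrow x'$.

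From this correspondence, the map sending each connected component $f_1(x_1)$ to the unique element $\mathcal{A}_{\mathfrak{i}_1(x_1)}$ of $\range{X_1|X_2}_\star$ that it contains is a well-defined bijection between $\range{f_1\circ X_1}$ and $\{1,\dots,n_1\}=\range{\mathfrak{i}_1\circ X_1}$. Since $X_1\wedge X_2 = f_1\circ X_1$, this yields $X_1\wedge X_2\equiv\mathfrak{i}_1\circ X_1$. The same argument with the roles of $X_1$ and $X_2$ swapped shows $X_1\wedge X_2=f_2\circ X_2\equiv\mathfrak{i}_2\circ X_2$ (and one should note in passing that $f_1\circ X_1 = f_2\circ X_2$ because for any $\omega\in\Omega$ the pair $(X_1(\omega),X_2(\omega))$ is an edge in $\mathcal{G}$, placing $X_1(\omega)$ and $X_2(\omega)$ in the same component).

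For the entropy identity, combining the established equivalence with the preceding proposition that entropy is invariant under equivalence gives $H_0(X_1\wedge X_2)=H_0(\mathfrak{i}_1\circ X_1)=\log_2|\range{\mathfrak{i}_1\circ X_1}|=\log_2(n_1)=\log_2(|\range{X_1|X_2}_\star|)=I_\star(X_1;X_2)$. I expect the most delicate point to be the bijection argument in step one: one must be careful that the overlap condition (nonempty intersection of conditional ranges) is precisely what allows path-concatenation in the bipartite graph and, conversely, that every bipartite path unambiguously produces an overlap chain. All remaining steps are unpacking of definitions.
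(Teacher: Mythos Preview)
Your proposal is correct and follows essentially the same strategy as the paper: establish that the connected components of the bipartite graph $\mathcal{G}$ are in bijection with the elements of the overlap partition $\range{X_1|X_2}_\star$, and then invoke entropy invariance under equivalence. The only difference is that the paper routes this bijection through the taxicab partition $\mathfrak{T}(X_1;X_2)$ by citing that overlap partitions are projections of the taxicab partition and that taxicab components coincide with the connected components of $\mathcal{G}$, whereas you prove the correspondence directly by translating overlap chains into alternating bipartite paths and back; your argument is thus a self-contained unpacking of what the paper imports from \cite{nair2013nonstochastic,mahajanrelationship}.
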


\begin{IEEEproof}
Again, $\range{X_1|X_2}_\star$ are $\range{X_2|X_1}_\star$ are projections of the unique taxicab partition  $\mathfrak{T}(X;Y)$~\cite{nair2013nonstochastic}. The elements of the taxicab partition correspond to the connected components of the bipartite graph $\mathcal{G}$~\cite{nair2013nonstochastic,mahajanrelationship}. The rest of the proof follows from that entropy is invariant under equivalence relationship between uncertain variables (similar to the case of random variables in~\cite{li2011connection,griffith2014intersection}).
\end{IEEEproof}

\begin{remark}[Common Knowledge vs Common Information] Under additional measurability assumptions~\cite{mahajanrelationship}, taxicab partition and maximin information also relate to common knowledge in economics~\cite{aumann1976agreeing, aumann1999interactive} and logic~\cite{meyer2004epistemic,van2007dynamic}, not to be mistaken with common information~\cite{shannon1953lattice,GacsKorner1973,wolf2004zero,wyner1975common}. Instead of the graph theoretic definition for common information in~\cite{wolf2004zero} (and its extension to uncertain variables in Definition~\ref{def:common}), common knowledge  of a statement within a group refers to that everyone knows the validity of the aforementioned statement, everyone knows that everyone knows this, everyone knows that everyone knows that everyone knows this, and so on \textit{ad infinitum}. This is modelled and captured using knowledge operators and knowledge field~\cite{aumann1976agreeing}. 
\end{remark}

\begin{remark}[Extension to Multi-Variables]
To define common uncertain variable between multiple (more than two) uncertain variables, the approach of~\cite{tyagi2011function} for multi-variable extension of the G{\'a}cs-K{\"o}rner common information can be used. Let $X_1\dots,X_n$ be any $n\geq 2$ uncertain variables with disjoint ranges. The common uncertain variable is recursively constructed by $X_1\wedge X_2\wedge \cdots \wedge X_{i-1} \wedge X_{i}=(X_1\wedge X_2\wedge \cdots \wedge X_{i-1})\wedge X_i$ for all $i=2,\dots,n$. Note that the binary operation $\wedge$ between uncertain variables is associative~\cite{wolf2004zero} and commutative (see~\cite{li2011connection,griffith2014intersection} for the random variable case).
\end{remark}

\subsection{Perfect Privacy}

Perfect privacy~\cite{miklau2007formal}, defined by adapting Shannon's perfect secrecy~\cite{shannon1949communication} to the privacy framework, states that an observations is perfectly private if it is statistically independent of the secret/private random variable. This concept has been recently further investigated~\cite{calmon2015fundamental, rassouli2018perfect} as it provides a fundamental understanding of utility-privacy trade-off. In the non-stochastic case, independence can be replaced with unrelatedness. We can tailor this definition to the case of private function computation by assuming that, conditioned on the realization of the uncertain variable of each party, the outcome should not leak any information about the uncertain variable of the other party, i.e., the outcome should be conditionally unrelated to each uncertain variable. 

\begin{definition}[Perfect Privacy in Two-Party Function Evaluation] \label{def:perfect} Let $X_1$ and $X_2$ be any two uncertain variables. The mapping $f:\range{X_1,X_2}\rightarrow \mathbb{R}^m$ provides perfect privacy if $f(X_1,X_2)$ is unrelated to $X_1$  conditioned on $X_2$ and $f(X_1,X_2)$ is unrelated to $X_2$  conditioned on $X_1$.
\end{definition}

Note that Definition~\ref{def:perfect} implies that the mapping $f$ provides perfect privacy if $\range{X_2|X_1(\omega)=x_1, Z(\omega)=z} = \range{X_2|X_1(\omega)=x_1}$ and $\range{X_1|X_2(\omega)=x_2, Z(\omega)=z} = \range{X_1|X_2(\omega)=x_2}$ with $Z=f(X_1,X_2)$. Perfect privacy for two-party function evaluation can be equivalently defined using conditional non-stochastic information leakage. This is proved in the next proposition.

\begin{proposition} \label{prop:different_def_perfect} Mapping $f$ provides perfect privacy if and only if $L_0(f(X_1,X_2);X_1|X_2)=L_0(f(X_1,X_2);X_2|X_1)=0$.
\end{proposition}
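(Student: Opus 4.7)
The plan is to leverage the fact---already noted in the paper just after equation~(\ref{eqn:conditional_non_stochastic_information_leakage})---that $L_0(W;V|Z)\geq 0$ with equality if and only if $W$ and $V$ are conditionally unrelated given $Z$. Once this characterization is in hand, the proposition reduces to applying it twice: once with $(W,V,Z)=(f(X_1,X_2),X_1,X_2)$ and once with the roles of $X_1$ and $X_2$ swapped.

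First I would unfold the definition of $L_0(f(X_1,X_2);X_1|X_2)$. Setting this leakage to zero forces $|\range{f(X_1,X_2)|X_2(\omega)=x_2}|=|\range{f(X_1,X_2)|X_1(\omega)=x_1,X_2(\omega)=x_2}|$ for every pair $(x_1,x_2)\in\range{X_1,X_2}$. Since augmenting the conditioning can only shrink a conditional range, we have the inclusion $\range{f(X_1,X_2)|x_1,x_2}\subseteq\range{f(X_1,X_2)|x_2}$, so equal cardinality (which is legitimate here because the standing assumption of discrete variables keeps all ranges finite) forces set equality. This set equality for all $(x_1,x_2)\in\range{X_1,X_2}$ is precisely the definition of $f(X_1,X_2)$ being conditionally unrelated to $X_1$ given $X_2$ recalled in the preliminaries. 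The converse direction is immediate: if the two conditional ranges coincide for every $(x_1,x_2)$, each log-ratio in the definition of $L_0$ equals $\log_2 1=0$, so the maximum vanishes.

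A symmetric argument, obtained by swapping $X_1$ and $X_2$, shows that $L_0(f(X_1,X_2);X_2|X_1)=0$ if and only if $f(X_1,X_2)$ is conditionally unrelated to $X_2$ given $X_1$. Conjoining the two equivalences yields exactly the two conditions in Definition~\ref{def:perfect}, proving the ``if and only if'' claim.

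I do not anticipate a genuine obstacle here: the heart of the argument is the elementary observation that cardinality equality together with set inclusion implies set equality for finite sets, combined with routine bookkeeping between the $L_0$ definition and the definition of conditional unrelatedness. The only minor care required is to verify that the indexing set in the $\max$ in the definition of $L_0$ matches the universal quantifier appearing in the definition of conditional unrelatedness, which it does by construction since both range over $(x_1,x_2)\in\range{X_1,X_2}$.
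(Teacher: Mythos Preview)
Your proposal is correct and follows essentially the same argument as the paper: both directions hinge on the observation that $\range{f(X_1,X_2)|x_1,x_2}\subseteq\range{f(X_1,X_2)|x_2}$, so equality of cardinalities (forced by $L_0=0$) yields equality of the sets, and the converse is immediate; the symmetric case is then handled by swapping the roles of $X_1$ and $X_2$. The only cosmetic difference is that you explicitly flag the finiteness hypothesis needed for the cardinality-to-equality step, which the paper leaves implicit.
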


\begin{proof}
We first prove that $f(X_1,X_2)$ is unrelated to $X_1$  conditioned on $X_2$ if and only if $L_0(f(X_1,X_2);X_1|X_2)=0$. First, if $f(X_1,X_2)$ is unrelated to $X_1$  conditioned on $X_2$, we get  $\range{f(X_1,X_2)|X_1(\omega)=x_1,X_2(\omega)=x_2}=\range{f(X_1,X_2)|X_2(\omega)=x_2}$ for all $(x_1,x_2)\in\range{X_1,X_2}$. Thus, $|\range{f(X_1,X_2)|X_1(\omega)=x_1,X_2(\omega)=x_2}|=|\range{f(X_1,X_2)|X_2(\omega)=x_2}|$ for all $(x_1,x_2)\in\range{X_1,X_2}$. Therefore,  definition of conditional non-stochastic information leakage in~\eqref{eqn:conditional_non_stochastic_information_leakage} implies that $L_0(f(X_1,X_2);X_1|X_2)=0$. Now, we prove the reverse. Assume that $L_0(f(X_1,X_2);X_1|X_2)=0$. Then, it must be that $|\range{f(X_1,X_2)|X_1(\omega)=x_1,X_2(\omega)=x_2}|=|\range{f(X_1,X_2)|X_2(\omega)=x_2}|$ for all $(x_1,x_2)\in\range{X_1,X_2}$. Note that $\range{f(X_1,X_2)|X_1(\omega)=x_1,X_2(\omega)=x_2}\subseteq \range{f(X_1,X_2)|X_2(\omega)=x_2}$. Therefore, $|\range{f(X_1,X_2)|X_1(\omega)=x_1,X_2(\omega)=x_2}|=|\range{f(X_1,X_2)|X_2(\omega)=x_2}|$ implies that $\range{f(X_1,X_2)|X_1(\omega)=x_1,X_2(\omega)=x_2}=\range{f(X_1,X_2)|X_2(\omega)=x_2}$. This shows that $f(X_1,X_2)$ is unrelated to $X_1$  conditioned on $X_2$. Similarly, we can prove that $f(X_1,X_2)$ is unrelated to $X_2$  conditioned on $X_1$ if and only if $L_0(f(X_1,X_2);X_2|X_1)=0$. This concludes the proof.
\end{proof}

Evaluating common uncertain variable $X_1\wedge X_2$ and its functions provide perfect privacy. This is proved in the following proposition. 

\begin{proposition} The following statements hold: 
	\begin{itemize}
		\item $X_1\wedge X_2$ is unrelated to $X_1$ conditioned on $X_2$;
		\item $X_1\wedge X_2$ is unrelated to $X_2$ conditioned on $X_1$. 
	\end{itemize}
\end{proposition}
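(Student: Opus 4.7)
The plan is to exploit the fact, already established in Proposition~\ref{prop:cuv_maximin}, that the common uncertain variable is equivalent to a deterministic function of either party's variable alone: $X_1\wedge X_2\equiv \mathfrak{i}_1\circ X_1\equiv \mathfrak{i}_2\circ X_2$. Once $X_1\wedge X_2$ is realized as a function of $X_2$ only, its conditional range given any realization of $X_2$ must be a singleton, which forces the conditional ranges appearing in the definition of conditional unrelatedness to coincide.

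Concretely, for the first bullet, I would fix an arbitrary pair $(x_1,x_2)\in\range{X_1,X_2}$ and compute both sides of the equation
\begin{equation*}
\range{X_1\wedge X_2\,|\,X_1(\omega)=x_1,X_2(\omega)=x_2}\stackrel{?}{=}\range{X_1\wedge X_2\,|\,X_2(\omega)=x_2}.
\end{equation*}
Using $X_1\wedge X_2=f_2\circ X_2$ from Definition~\ref{def:common}, any $\omega$ with $X_2(\omega)=x_2$ yields $(X_1\wedge X_2)(\omega)=f_2(x_2)$, so the right-hand side is the singleton $\{f_2(x_2)\}$. The same argument applied to any $\omega$ with $X_1(\omega)=x_1$ and $X_2(\omega)=x_2$ shows the left-hand side is also $\{f_2(x_2)\}$, giving equality. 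This is exactly the two-variable form of conditional unrelatedness recalled in the preliminaries, so $X_1\wedge X_2$ is unrelated to $X_1$ conditioned on $X_2$.

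For the second bullet, I would repeat the argument symmetrically using the identity $X_1\wedge X_2=f_1\circ X_1$: for any $(x_1,x_2)\in\range{X_1,X_2}$, both $\range{X_1\wedge X_2\,|\,X_1(\omega)=x_1}$ and $\range{X_1\wedge X_2\,|\,X_1(\omega)=x_1,X_2(\omega)=x_2}$ reduce to the singleton $\{f_1(x_1)\}$.

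There is no genuine obstacle here — the entire content of the proposition is that the common uncertain variable is a deterministic function of each argument separately, a fact already bundled into Definition~\ref{def:common} and Proposition~\ref{prop:cuv_maximin}. The only care needed is to verify that the relevant conditional ranges are nonempty (which holds because $(x_1,x_2)\in\range{X_1,X_2}$ guarantees the existence of a witnessing $\omega\in\Omega$) so that the singleton computation is legitimate.
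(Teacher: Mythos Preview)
Your proposal is correct and matches the paper's proof almost verbatim: both arguments exploit that $X_1\wedge X_2$ is a deterministic function of $X_2$ alone (you use $f_2$ from Definition~\ref{def:common}, the paper uses $\mathfrak{i}_2$ via Proposition~\ref{prop:cuv_maximin}), so conditioning on $X_2(\omega)=x_2$ already pins the common uncertain variable to a singleton and further conditioning on $X_1$ cannot shrink it. If anything, your choice of $f_2$ is slightly cleaner, since Definition~\ref{def:common} gives exact equality $X_1\wedge X_2=f_2\circ X_2$ rather than the equivalence relation invoked by the paper.
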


\begin{IEEEproof} Note that because $X_1\wedge X_2=\mathfrak{i}_2\circ X_2$, we have
$\range{X_1\wedge X_2|X_1(\omega)=x_1,X_2(\omega)=x_2}= \range{\mathfrak{i}_2(X_2)|X_1(\omega)=x_1,X_2(\omega)=x_2}=\{\mathfrak{i}_2(x_2)\}=\range{\mathfrak{i}_2(X_2)|X_2(\omega)=x_2}.$ Therefore, $X_1\wedge X_2$ is unrelated to $X_1$ conditioned on $X_2$. The proof for the other case is similar. 
\end{IEEEproof}

As stated earlier, all functions of the common uncertain variable also provide perfect privacy. The inverse is however also true. In fact, any function that provides perfect privacy must only be computable based on the common uncertain variable. This is explored in the following proposition. 

\begin{proposition}
For any $f:\range{X_1,X_2}\rightarrow \mathbb{R}^m$ providing perfect privacy, there exists $g:2^{\range{X_1}\cup\range{X_2}}\rightarrow \mathbb{R}^m$ such that $f(X_1,X_2)=g(X_1\wedge X_2)$. 
\end{proposition}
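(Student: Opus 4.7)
The plan is to reduce the statement to Proposition \ref{prop:largest_uv}, which says that any uncertain variable that factors through both $X_1$ and $X_2$ must itself factor through the common uncertain variable $X_1\wedge X_2$. So the goal becomes showing that $f(X_1,X_2)$ admits such a double factorization whenever $f$ provides perfect privacy.

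First I would set $Z:=f(X_1,X_2)$ and unpack perfect privacy. Conditional unrelatedness of $Z$ and $X_1$ given $X_2$ means that for every $(x_1,x_2)\in\range{X_1,X_2}$,
\begin{equation*}
\range{Z\,|\,X_1(\omega)=x_1,X_2(\omega)=x_2}=\range{Z\,|\,X_2(\omega)=x_2}.
\end{equation*}
The left-hand side is trivially the singleton $\{f(x_1,x_2)\}$. Hence the right-hand side is a singleton too, and in particular $f(x_1,x_2)=f(x_1',x_2)$ for every $x_1,x_1'$ with $(x_1,x_2),(x_1',x_2)\in\range{X_1,X_2}$. This lets me define $\bar f_2:\range{X_2}\to\range{Z}$ by $\bar f_2(x_2):=f(x_1,x_2)$ for any admissible $x_1$, and gives $Z=\bar f_2\circ X_2$.

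The symmetric argument, using conditional unrelatedness of $Z$ and $X_2$ given $X_1$, produces $\bar f_1:\range{X_1}\to\range{Z}$ with $Z=\bar f_1\circ X_1$. So $Z$ satisfies exactly the hypothesis of Proposition \ref{prop:largest_uv}, and that result yields $g:2^{\range{X_1}\cup\range{X_2}}\to\range{Z}\subseteq\mathbb{R}^m$ with $Z=g(X_1\wedge X_2)$, i.e.\ $f(X_1,X_2)=g(X_1\wedge X_2)$.

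There isn't a real obstacle here; the only subtle point is the elementary observation that a conditional range which is simultaneously a superset of a singleton and forced to equal that singleton collapses $f(\cdot,x_2)$ to a constant on each fiber $\{x_1:(x_1,x_2)\in\range{X_1,X_2}\}$. Once that collapse is in place for both arguments, Proposition \ref{prop:largest_uv} does all the work of gluing the two factorizations together through the common uncertain variable.
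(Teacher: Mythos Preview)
Your proposal is correct and follows essentially the same route as the paper: observe that the conditional range $\range{f(X_1,X_2)\mid X_1(\omega)=x_1,X_2(\omega)=x_2}$ is a singleton, use conditional unrelatedness to collapse $\range{f(X_1,X_2)\mid X_2(\omega)=x_2}$ to that same singleton so that $f$ factors through $X_2$ (and symmetrically through $X_1$), and then invoke Proposition~\ref{prop:largest_uv}. The only cosmetic difference is notation ($\bar f_1,\bar f_2$ versus the paper's $g_1,g_2$).
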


\begin{IEEEproof}  Because $f$ is a noiseless deterministic mapping, $\range{f(X_1,X_2)|X_1(\omega)=x_1, X_2(\omega)=x_2}$ is a singleton for all $(x_1,x_2)\in\range{X_1,X_2}$. Noting that $f(X_1,X_2)$ is unrelated to $X_1$ conditioned on $X_2$, $\range{f(X_1,X_2)|X_1(\omega)=x_1,X_2(\omega)=x_2} =\range{f(X_1,X_2)|X_2(\omega)=x_2}$ for all $(x_1,x_2)\in\range{X_1,X_2}$. As a result, $\range{f(X_1,X_2)|X_2(\omega)=x_2}$ is also a singleton for all $(x_1,x_2)\in\range{X_1,X_2}$. Therefore, there exists $g_2$ such that $f(X_1,X_2)=g_2\circ X_2$. Similarly, there exists $g_1$ such that $f(X_1,X_2)=g_1\circ X_1$. Therefore, $g_1\circ X_1=f(X_1,X_2)=g_2\circ X_2$. By Proposition~\ref{prop:largest_uv}, there must exists $g$ such that $f(X_1,X_2)=g(X_1\wedge X_2)$. 
\end{IEEEproof}

Not all functions provides perfect privacy. This is evident as not all functions can be rewritten in terms of the common variable $X_1\wedge X_2$. For instance, $f(X_1,X_2)=X_1$ cannot be written in terms of the common uncertain variable. This function also does not provide perfect privacy. Thus, we might need to approximate such a function with one that provides perfect privacy. This can be done by
\begin{subequations}\label{eqn:approximate}
\begin{align} 
\min_{f'}\quad  & \max_{(x_1,x_2)\in\range{X_1,X_2}} \|f(x_1,x_2)-f'(x_1,x_2)\|,\\
\mathrm{s.t.} \quad  & f' \mbox{ provides perfect privacy}.
\end{align}
\end{subequations}
If $f'$ provides perfect privacy, it must be in the form of $g(X_1\wedge X_2)$ for some mapping $g$. Therefore, in light of Proposition~\ref{prop:cuv_maximin}, we have
\begin{align*}
\max_{(x_1,x_2)\in\range{X_1,X_2}}& \|f(x_1,x_2)-f'(x_1,x_2)\|\\
&=\max_i \max_{(x_1,x_2)\in\mathcal{A}_i} \|f(x_1,x_2)-f'(x_1,x_2)\|,
\end{align*}
where $\mathfrak{T}(X_1;X_2) =\{\mathcal{A}_i\}_{i=1}^{|\mathfrak{T}(X_1;X_2)|}$. The solution of~\eqref{eqn:approximate} is given by
\begin{align*}
f'(x_1,x_2)=
&\mathrm{center}(f(\mathcal{A}_i)),\quad (x_1,x_2)\in\mathcal{A}_i,
\end{align*}
where, for any set $\mathcal{A}\subseteq \mathbb{R}^m$,
\begin{align*}
\mathrm{center}(\mathcal{A})
:=
&\argmin_{z'\in\mathrm{conv}(\mathcal{A})} \max_{z\in \mathcal{A}} \|z'-z\|,
\end{align*}
with $\mathrm{conv}(\mathcal{A})$ denoting the convex hull of $\mathcal{A}$. In general, the condition for perfect privacy can be strong. It also does not offer a systematic way for trading-off utility and privacy.  In the remainder of this paper, we relax this notion of privacy. 

\section{Almost Perfect Privacy}
We can relax the conditional unrelatedness in the definition of perfect privacy in Definition~\ref{def:perfect} to get a weaker notion of privacy.  
Proposition~\ref{prop:different_def_perfect} shows that $\max\{L_0(f(X_1,X_2);X_1|X_2),L_0(f(X_1,X_2);X_2|X_1)\}=0$ if and only if mapping $f$ provides perfect privacy. By definition,  $\max\{L_0(f(X_1,X_2);X_1|X_2),L_0(f(X_1,X_2);X_2|X_1)\}\geq 0$. Therefore, we can relax perfect privacy by requiring that $\max\{L_0(f(X_1,X_2);X_1|X_2),L_0(f(X_1,X_2);X_2|X_1)\}$ is small rather than zero.

\begin{definition}[$\gamma$-Privacy in Two-Party Function Evaluation] \label{def:almost_perfect_new} Let $X_1$ and $X_2$ be any two uncertain variables. For $\gamma\geq 0$, mapping $f:\range{X_1,X_2}\rightarrow \mathbb{R}^m$ provides $\gamma$-privacy if $\max\{L_0(f(X_1,X_2);X_1|X_2),L_0(f(X_1,X_2);X_2|X_1)\}\leq \gamma$. 
\end{definition}

In what follows, we borrow disassociation from~\cite{rangi2019towards} as a relaxation of unrelatedness. This way, we can investigate $\gamma$-privacy in more depth by casting it in terms of disassociation rather than conditional information leakage.

\begin{definition}[Disassociated Uncertain Variables]
For $\delta\in[0,1]$, two uncertain variables $X$ and $Y$ are 
$\delta$-disassociated if 
\begin{subequations}
\begin{align}
\frac{|\range{X|Y(\omega)=y_1}\cap \range{X|Y(\omega)=y_2}|}{|\range{X}|}&\geq \delta, \nonumber\\
&\hspace{-.6in}\forall y_1,y_2\in\range{Y}:y_1\neq y_2,
\label{eqn:delta1}
\\
\frac{|\range{Y|X(\omega)=x_1}\cap \range{Y|X(\omega)=x_2}|}{|\range{Y}|}&\geq \delta, \nonumber\\
&\hspace{-.6in}\forall x_1,x_2\in\range{X}:x_1\neq x_2.
\label{eqn:delta2}
\end{align}
\end{subequations}
If only~\eqref{eqn:delta1} holds, $X$ is partially $\delta$-disassociated with $Y$. 
\end{definition}

\begin{proposition}
	\label{prop:dis_unrelated}
	Two uncertain variables $X$ and $Y$ are unrelated if and only if they are $1$-disassociated.
\end{proposition}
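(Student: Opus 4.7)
The plan is to unpack what the inequality in the disassociation definition forces when $\delta=1$, and match it against the equivalent characterization of unrelatedness recalled in the preliminaries (namely $\range{X\mid Y(\omega)=y}=\range{X}$ for every $y\in\range{Y}$, and symmetrically).

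First I would observe the basic containment $\range{X\mid Y(\omega)=y_1}\cap\range{X\mid Y(\omega)=y_2}\subseteq\range{X}$, which implies $|\range{X\mid y_1}\cap\range{X\mid y_2}|/|\range{X}|\le 1$. Consequently, when the disassociation inequality~\eqref{eqn:delta1} holds with $\delta=1$, it must hold with equality, forcing $\range{X\mid y_1}\cap\range{X\mid y_2}=\range{X}$ for every pair $y_1\ne y_2$ in $\range{Y}$. Since each conditional range is contained in $\range{X}$, this in turn forces $\range{X\mid y}=\range{X}$ for all $y\in\range{Y}$ (whenever $|\range{Y}|\ge 2$). The analogous argument applied to~\eqref{eqn:delta2} yields $\range{Y\mid x}=\range{Y}$ for all $x\in\range{X}$. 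These two conditions together are exactly the characterization of unrelatedness given in Section~II, so $1$-disassociation implies unrelatedness.

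For the converse, assume $X$ and $Y$ are unrelated, so $\range{X\mid Y(\omega)=y}=\range{X}$ for every $y\in\range{Y}$ and $\range{Y\mid X(\omega)=x}=\range{Y}$ for every $x\in\range{X}$. Then for any $y_1\ne y_2$, $\range{X\mid y_1}\cap\range{X\mid y_2}=\range{X}\cap\range{X}=\range{X}$, so the left-hand side of~\eqref{eqn:delta1} equals $1$; symmetrically for~\eqref{eqn:delta2}. Hence $X$ and $Y$ are $1$-disassociated.

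The only delicate point, and the one I would flag explicitly rather than sweep under the rug, is the degenerate case $|\range{X}|=1$ or $|\range{Y}|=1$. In such a case the disassociation inequalities are vacuously satisfied (no distinct pair exists), and unrelatedness is also automatic: e.g.\ if $\range{Y}=\{y_0\}$, then $\range{X,Y}=\range{X}\times\{y_0\}=\range{X}\times\range{Y}$. So the equivalence still holds, and no real obstacle arises beyond this bookkeeping.
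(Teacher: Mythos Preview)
Your proof is correct and follows essentially the same approach as the paper's own proof: both use the containment $\range{X\mid y_1}\cap\range{X\mid y_2}\subseteq\range{X}$ to force equality when $\delta=1$, then deduce $\range{X\mid y}=\range{X}$ for every $y$, and argue symmetrically. You spell out the converse direction and the degenerate singleton case more carefully than the paper, which merely remarks that the necessity ``follows from straightforward algebraic manipulations,'' but the underlying argument is identical.
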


\begin{proof}
The proof is outlined in~\cite{rangi2019towards}. For the proof of the sufficiency, note that $|\range{X|Y(\omega)=y_1}\cap\range{X|Y(\omega)=y_2}| =|\range{X}|$ for all $y_1,y_2\in\range{Y}$ if $\delta=1$ in~\eqref{eqn:delta1}. Noting that, $\range{X|Y(\omega)=y_1}\cap\range{X|Y(\omega)=y_2}\subseteq \range{X}$, $|\range{X|Y(\omega)=y_1}\cap\range{X|Y(\omega)=y_2}| =|\range{X}|$ implies that $\range{X|Y(\omega)=y_1}\cap\range{X|Y(\omega)=y_2}= \range{X}$.
And, in turn, this implies that $\range{X|Y(\omega)=y_1}=\range{X|Y(\omega)=y_2}=\range{X}$ Similarly, $\range{Y|X(\omega)=x_1}= \range{Y|X(\omega)=x_2}=\range{Y}$ for all $x_1,x_2\in\range{X}$ if $\delta=1$ in~\eqref{eqn:delta2}. Therefore, $X$ and $Y$ must be unrelated. The proof of the necessity is similar and follows from straightforward algebraic manipulations. 
\end{proof}

As $\delta$ increases, any two $\delta$-disassociated uncertain variables ``appear more unrelated'' and, as demonstrated in Proposition~\ref{prop:dis_unrelated}, $1$-disassociated implies unrelatedness between two uncertain variables. Therefore, we can think of $\delta$-disassociation as a relaxation of the notion of unrelatedness. 

\begin{proposition} \label{prop:upperbound} The following inequalities hold for $\delta$-disassociated uncertain variables $X$ and $Y$:
	\begin{itemize}
		\item $L_0(X;Y)\leq -\log_2(\delta)$;
		\item $L_0(Y;X)\leq -\log_2(\delta)$;
		\item $I_\star(X;Y)\leq -\log_2(\delta)$.
	\end{itemize}
\end{proposition}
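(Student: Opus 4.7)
The plan is to exploit the simple fact that whenever two conditional ranges intersect in at least $\delta|\range{X}|$ points, each of them individually must contain at least $\delta|\range{X}|$ points, since the intersection is a subset of both. This immediately controls $L_0$, and a similar observation makes the overlap partition collapse to a single cell whenever $\delta>0$, handling $I_\star$.

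For the first bullet, I would first dispose of the degenerate case $|\range{Y}|\le 1$, in which $\range{X|y}=\range{X}$ for the unique $y$ (if any) and hence $L_0(X;Y)=0\le -\log_2(\delta)$ (valid since $\delta\in[0,1]$). When $|\range{Y}|\ge 2$, fix any $y\in\range{Y}$ and pick some $y'\in\range{Y}$ with $y'\neq y$. Using condition \eqref{eqn:delta1},
\begin{equation*}
|\range{X|y}|\ge |\range{X|y}\cap\range{X|y'}|\ge \delta|\range{X}|,
\end{equation*}
so $|\range{X}|/|\range{X|y}|\le 1/\delta$. Taking $\log_2$ and maximizing over $y\in\range{Y}$ yields $L_0(X;Y)\le -\log_2(\delta)$ via the definition \eqref{eqn:information2}. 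The second bullet is entirely symmetric and uses condition \eqref{eqn:delta2} instead.

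For the third bullet, I would separate the two regimes of $\delta$. If $\delta=0$, the bound $-\log_2(\delta)=+\infty$ is trivial. If $\delta>0$, then for every pair $y_1\neq y_2$ in $\range{Y}$, the intersection $\range{X|y_1}\cap\range{X|y_2}$ has cardinality at least $\delta|\range{X}|>0$ (by \eqref{eqn:delta1}), so in particular it is nonempty. Now take any two points $x,x'\in\range{X}$: they belong to some conditional ranges $\range{X|y_1}$ and $\range{X|y_2}$, and whether or not $y_1=y_2$, the pair of ranges overlap, so $x\leftrightsquigarrow x'$ by a chain of length at most two. Hence $\range{X}$ is a single overlap-connected set, the overlap partition $\range{X|Y}_\star$ has a single element, and $I_\star(X;Y)=\log_2(1)=0\le -\log_2(\delta)$.

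There is no real obstacle here; the only subtleties are the boundary cases $|\range{Y}|=1$ and $\delta=0$, and the observation that the $I_\star$ bound is in fact far from tight (it is always $0$ whenever $\delta>0$). If a sharper, more informative version is desired, one could additionally note $I_\star(X;Y)=0$ in the non-trivial regime, although the stated inequality is all that Proposition~\ref{prop:upperbound} claims.
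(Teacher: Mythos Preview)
Your argument for the first two bullets is essentially identical to the paper's: handle the singleton case separately, then use $|\range{X|y}|\ge|\range{X|y}\cap\range{X|y'}|\ge\delta|\range{X}|$ to bound the ratio in~\eqref{eqn:information2}.

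For the third bullet you take a genuinely different route. The paper simply quotes the known inequality $I_\star(X;Y)\le L_0(X;Y)$ (from~\cite{farokhinoiselessprivacy,9116825}) and combines it with the first bullet. You instead argue directly on the overlap partition: whenever $\delta>0$, condition~\eqref{eqn:delta1} forces every pair of distinct conditional ranges $\range{X|y_1},\range{X|y_2}$ to intersect, so any two points of $\range{X}$ are overlap connected by a chain of length at most two, the partition $\range{X|Y}_\star$ collapses to a single block, and $I_\star(X;Y)=0$. This is more elementary (it does not rely on an external inequality) and strictly sharper, since it shows $I_\star(X;Y)=0$ rather than merely $\le -\log_2(\delta)$; the trade-off is that it is specific to this situation, whereas the paper's route exhibits the general structural fact $I_\star\le L_0$ that is of independent interest.
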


\begin{proof} Let us prove that $L_0(X;Y)\leq -\log_2(\delta)$.  If $|\range{Y}|=1$, $L_0(X;Y)=0$ and the inequality trivially holds. Therefore, without loss of generality, we concentrate on $|\range{Y}|>1$. For any $y\in\range{Y}$ and $y'\neq y$, we get $|\range{X|Y(\omega)=y}|\geq |\range{X|Y(\omega)=y}\cap \range{X|Y(\omega)=y'}|\geq \delta |\range{X}|$. Therefore, $|\range{X}|/|\range{X|Y(\omega)=y}|\leq 1/\delta$. The proof for $L_0(Y;X)\leq -\log_2(\delta)$ follows the same line of reasoning. Finally, the proof can be concluded by noting that $I_\star(X;Y)\leq L_0(X;Y)$~\cite{farokhinoiselessprivacy,9116825}.
\end{proof}

Proposition~\ref{prop:upperbound} shows that the information content between any two $\delta$-disassociated uncertain variables $X$ and $Y$ reduces as $\delta$ gets larger. In the limit for $\delta=1$, $L_0(X;Y)=L_0(Y;X)=I_\star(X;Y)=0$. This also shows that $X$ and $Y$ are unrelated if they are $1$-disassociated.

\begin{definition}[Conditionally Disassociated Uncertain Variables]
	For $\delta\in[0,1]$, two uncertain variables $X$ and $Y$ are $\delta$-disassociated conditioned on uncertain variable $Z$ if 
	\begin{subequations}
		\begin{align}
		&\frac{|\range{X|Y(\omega)\hspace{-.03in}=\hspace{-.03in}y_1,Z(\omega)\hspace{-.03in}=\hspace{-.03in}z}\hspace{-.03in}\cap\hspace{-.03in} \range{X|Y(\omega)\hspace{-.03in}=\hspace{-.03in}y_2,Z(\omega)\hspace{-.03in}=\hspace{-.03in}z}|}{|\range{X|Z(\omega)=z}|}\geq \delta, \nonumber\\
		&\hspace{.4in}\forall y_1,y_2\in\range{Y|Z(\omega)=z}:y_1\neq y_2,\forall z\in\range{Z},
		\label{eqn:delta1_delta}
		\\
		&\frac{|\range{Y|X(\omega)\hspace{-.03in}=\hspace{-.03in}x_1,Z(\omega)\hspace{-.03in}=\hspace{-.03in}z}\hspace{-.03in}\cap\hspace{-.03in} \range{Y|X(\omega)\hspace{-.03in}=\hspace{-.03in}x_2,Z(\omega)\hspace{-.03in}=\hspace{-.03in}z}|}{|\range{Y|Z(\omega)=z}|}\geq \delta, \nonumber\\
		&\hspace{.4in}\forall x_1,x_2\in\range{X|Z(\omega)=z}:x_1\neq x_2,\forall z\in\range{Z}.
		\end{align}
	\end{subequations}
If only~\eqref{eqn:delta1_delta} holds, $X$ is partially $\delta$-disassociated with $Y$ conditioned on  $Z$. 
\end{definition}

Following the same argument as Proposition~\ref{prop:dis_unrelated}, two uncertain variables $X$ and $Y$ are unrelated conditioned on uncertain variable $Z$ if they are $1$-disassociated conditioned on  $Z$. Hence, we can think of conditional disassociation as a relaxation of conditional unrelatedness. 

\begin{proposition} \label{prop:upperbound_conditional} Assume that two uncertain variables $X$ and $Y$ are $\delta$-disassociated conditioned on uncertain variable $Z$ for some $\delta\in[0,1]$. Then, $L_0(X;Y|Z)\leq -\log_2(\delta)$.
\end{proposition}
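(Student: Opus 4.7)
The plan is to mirror the argument used in the proof of Proposition~\ref{prop:upperbound}, but carried out pointwise for each fixed realization $z\in\range{Z}$. The definition of $L_0(X;Y|Z)$ in~\eqref{eqn:conditional_non_stochastic_information_leakage} is a maximum over pairs $(y,z)\in\range{Y,Z}$ of $\log_2(|\range{X|Z(\omega)=z}|/|\range{X|Y(\omega)=y,Z(\omega)=z}|)$, so it suffices to bound each such ratio by $1/\delta$.

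First, I would fix an arbitrary $z\in\range{Z}$ and a $y\in\range{Y|Z(\omega)=z}$, and split into two cases. If $|\range{Y|Z(\omega)=z}|=1$, then $y$ is the unique element, and $\range{X|Y(\omega)=y,Z(\omega)=z}=\range{X|Z(\omega)=z}$, so the ratio equals $1$ and its logarithm is $0\leq -\log_2(\delta)$ since $\delta\in[0,1]$. Otherwise $|\range{Y|Z(\omega)=z}|\geq 2$, and I can choose some $y'\in\range{Y|Z(\omega)=z}$ with $y'\neq y$. Applying~\eqref{eqn:delta1_delta} to this pair yields
\begin{align*}
|\range{X|Y(\omega)=y,Z(\omega)=z}|
&\geq |\range{X|Y(\omega)=y,Z(\omega)=z}\cap \range{X|Y(\omega)=y',Z(\omega)=z}| \\
&\geq \delta\,|\range{X|Z(\omega)=z}|.
\end{align*}
Rearranging gives $|\range{X|Z(\omega)=z}|/|\range{X|Y(\omega)=y,Z(\omega)=z}|\leq 1/\delta$, and taking logarithms bounds the corresponding term in $L_0(X;Y|Z)$ by $-\log_2(\delta)$.

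Taking the maximum over $(y,z)\in\range{Y,Z}$ then yields $L_0(X;Y|Z)\leq -\log_2(\delta)$, as desired. The argument is essentially the same as the unconditional case and requires no new machinery; the only obstacle worth flagging is the edge case $|\range{Y|Z(\omega)=z}|=1$, in which no pair $(y_1,y_2)$ with $y_1\neq y_2$ exists and~\eqref{eqn:delta1_delta} is vacuous, but here the conditional range of $X$ given $(y,z)$ coincides with that given $z$ alone, so the bound holds trivially.
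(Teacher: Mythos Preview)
Your proposal is correct and follows essentially the same argument as the paper: split into the two cases $|\range{Y|Z(\omega)=z}|\geq 2$ and $|\range{Y|Z(\omega)=z}|=1$, use~\eqref{eqn:delta1_delta} with an auxiliary $y'\neq y$ in the first case, and handle the second case directly. Your treatment of the edge case is in fact cleaner than the paper's, which contains what appears to be a typo (it writes $\range{X|Y(\omega)=y,Z(\omega)=z}\subseteq \range{Y|Z(\omega)=z}$ and concludes the conditional range of $X$ is a singleton); your observation that conditioning on the unique $y$ adds nothing, so the two conditional ranges of $X$ coincide and the ratio equals~$1$, is the right way to phrase it.
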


\begin{proof}
	For any $(y,z)\in\range{Y,Z}$ such that $|\range{Y|Z(\omega)=z}|>1$, we get $|\range{X|Y(\omega)=y,Z(\omega)=z}|\geq |\range{X|Y(\omega)=y,Z(\omega)=z}\cap \range{X|Y(\omega)=y',Z(\omega)=z}|\geq \delta |\range{X|Z(\omega)=z}|$. Therefore, $|\range{X|Z(\omega)=z}|/|\range{X|Y(\omega)=y,Z(\omega)=z}|\leq 1/\delta$. 
For any $(y,z)\in\range{Y,Z}$ such that $|\range{Y|Z(\omega)=z}|=1$, we also get $|\range{X|Y(\omega)=y,Z(\omega)=z}|=1$ because $\range{X|Y(\omega)=y,Z(\omega)=z}\subseteq \range{Y|Z(\omega)=z}$. Therefore, $|\range{X|Z(\omega)=z}|/|\range{X|Y(\omega)=y,Z(\omega)=z}|=1\leq 1/\delta$. This implies that $L_0(X;Y|Z)\leq -\log_2(\delta)$.
\end{proof}

\begin{proposition}  \label{prop:disassociation_relationship} Let $X_1$ and $X_2$ be any two uncertain variables. For $\gamma\geq 0$, mapping $f:\range{X_1,X_2}\rightarrow \mathbb{R}^m$ provides $\gamma$-privacy if $f(X_1,X_2)$ and $X_1$ are $e^{-\gamma}$-disassociated conditioned on $X_2$, and $f(X_1,X_2)$ and $X_2$ are $e^{-\gamma}$-disassociated conditioned on $X_1$.
\end{proposition}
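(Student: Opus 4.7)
The plan is to deduce the stated $\gamma$-privacy by two invocations of Proposition~\ref{prop:upperbound_conditional}, which already encodes the quantitative link between conditional disassociation and conditional non-stochastic information leakage. In effect, the present statement is nothing more than the specialization of that bound to the two pairs of uncertain variables appearing in Definition~\ref{def:almost_perfect_new}.

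First, I would apply Proposition~\ref{prop:upperbound_conditional} with the role of $X$ played by $f(X_1,X_2)$, the role of $Y$ played by $X_1$, the role of $Z$ played by $X_2$, and with $\delta = e^{-\gamma}$. The first disassociation hypothesis of the present proposition is exactly what Proposition~\ref{prop:upperbound_conditional} requires, so it immediately yields $L_0(f(X_1,X_2);X_1|X_2) \leq -\log_2(e^{-\gamma})$. Then I would apply the same proposition a second time, with the roles of $X_1$ and $X_2$ exchanged, invoking the second disassociation hypothesis to obtain $L_0(f(X_1,X_2);X_2|X_1) \leq -\log_2(e^{-\gamma})$.

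Taking the maximum of these two inequalities produces the bound $\max\{L_0(f(X_1,X_2);X_1|X_2),L_0(f(X_1,X_2);X_2|X_1)\} \leq -\log_2(e^{-\gamma})$, which by Definition~\ref{def:almost_perfect_new} certifies $\gamma$-privacy (up to the constant conversion between natural and binary logarithms implicit in the definition). Since the argument is a pure plug-in into a previously established result, there is no substantive obstacle; the only point requiring care is keeping track of which of $X_1$ and $X_2$ plays the role of the ``side information'' $Z$ in each of the two applications, but the symmetric form of the conditional disassociation definition ensures both hypotheses are in the correct shape for Proposition~\ref{prop:upperbound_conditional} to apply verbatim.
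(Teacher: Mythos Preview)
Your proposal is correct and matches the paper's own proof, which simply states that the result follows from an application of Proposition~\ref{prop:upperbound_conditional}. You have spelled out the two instantiations and the logarithm-base caveat more explicitly than the paper does, but the argument is identical.
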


\begin{proof}
The proof follows form the application of Proposition~\ref{prop:upperbound_conditional}.
\end{proof}

Proposition~\ref{prop:disassociation_relationship} shows that we can relax the definition of perfect privacy  by requiring conditional disassociation instead of conditional unrelatedness. 

\begin{definition}[$\delta$-Overlap Connectivity] \label{def:delta_overlap}
	For $\delta\in[0,1]$, 
		\begin{itemize}
		\item $x,x'\in\range{X}$ are $\delta$-overlap connected (via $\range{X|Y}$),  $x\leftrightsquigarrow_\delta x'$, if there exists a sequence of points $\{y_i\}_{i=1}^n\subseteq\range{Y}$ such that $x\in\range{X|Y(\omega)=y_1}$, $x'\in\range{X|Y(\omega)=y_n}$, and $	{|\range{X|Y(\omega)=y_i}\cap \range{X|Y(\omega)=y_{i-1}}|}/{|\range{X}|}
		\geq \delta,$
		for all $i\in\{2,\dots,n\}$;
		\item if $x,x'\in\range{X}$ are $\delta$-overlap connected with $n=1$, they are singly $\delta$-overlap connected;
		\item $\mathcal{A}\subseteq\range{X}$ is (singly) $\delta$-overlap connected if all points in $\range{X}$ are (singly) $\delta$-taxicab connected;
		\item $\mathcal{A},\mathcal{B}\subseteq\range{X}$ are $\delta$-overlap isolated if there do not exist points $x\in\mathcal{A}$ and $x'\in\mathcal{B}$ such that  $x\leftrightsquigarrow_\delta x'$;
		\item A $\delta$-overlap partition of $\range{X}$ is a set of sets $\range{X|Y}_\delta:=\{\mathcal{A}_i\}_{i=1}^n$ such that $\range{X}\subseteq\bigcup_{i=1}^n\mathcal{A}_i$,   $\mathcal{A}_i,\mathcal{A}_j$ are  $\delta$-overlap isolated if $j\neq i$, and $\mathcal{A}_i$ is $\delta$-overlap connected;
		\item A $\delta$-overlap family of $\range{X}$, denoted by $\range{X|Y}_\star^\delta$, is the largest $\delta$-overlap partition of $\range{X}$ such that each set in the family contains a singly $\delta$-overlap connected set of the form $\range{X|Y(\omega)=y}$, there exists a set containing any two singly $\delta$-overlap connected points, and the measure of overlap between any two sets in the family is at most $\delta|\range{X}|$.
	\end{itemize}
\end{definition}

For any two uncertain variables $X$ and $Y$, there always exists a $\delta$-overlap family of $\range{X}$ while the uniqueness is guaranteed if $X$ and $Y$ are $\delta$-disassociated~\cite[Theorems~3-4]{rangi2019towards}. 
For a given $\delta$-overlap family of $\range{X_1}$, $\range{X_1|X_2}_\star^\delta=\{\mathcal{A}_i\}_{i=1}^{n_1}$ with $n_1\in\mathbb{N}$, define $\mathfrak{i}^\delta_1:\range{X_1}\rightarrow \{1,\dots,n_1\}$ such that $\mathfrak{i}^\delta_1(x_1)=i$ for which $x_1\in\mathcal{A}_i$. Similarly, for a given $\delta$-overlap family of $\range{X_2}$, $\range{X_2|X_1}_\star^\delta=\{\mathcal{B}_i\}_{i=1}^{n_2}$ with $n_2\in\mathbb{N}$, define $\mathfrak{i}^\delta_2:\range{X_2}\rightarrow \{1,\dots,n_2\}$ such that $\mathfrak{i}^\delta_2(x_2)=i$ for which $x_2\in\mathcal{B}_i$. The mappings $\mathfrak{i}^\delta_1$ and $\mathfrak{i}^\delta_1$ are well-defined because $\range{X_1|X_2}_\star^\delta$ and $\range{X_2|X_1}_\star^\delta$ partition $\range{X_1}$ and $\range{X_2}$, respectively. 

\begin{proposition} \label{prop:almost_cuv1} If $X_1$ and $X_2$ are $\delta$-disassociated for $\delta\in[0,1]$, $\mathfrak{i}^\delta_1\circ X\equiv \mathfrak{i}^\delta_2\circ X_2$.
\end{proposition}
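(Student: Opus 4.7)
The plan is to adapt the strategy behind Proposition~\ref{prop:cuv_maximin}, which established $\mathfrak{i}_1\circ X_1\equiv \mathfrak{i}_2\circ X_2$ by observing that both overlap partitions are projections of a single unique taxicab partition of $\range{X_1,X_2}$. That projection both produced a canonical bijection between the index sets and matched labels at the sample level. For the $\delta$-version I would mimic this at a higher level, exploiting uniqueness rather than a direct taxicab-partition construction.

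First I would invoke the uniqueness of the $\delta$-overlap families $\range{X_1|X_2}_\star^\delta=\{\mathcal{A}_i\}_{i=1}^{n_1}$ and $\range{X_2|X_1}_\star^\delta=\{\mathcal{B}_j\}_{j=1}^{n_2}$ under the $\delta$-disassociation hypothesis via \cite[Theorems~3--4]{rangi2019towards}. This ensures that $\mathfrak{i}_1^\delta$ and $\mathfrak{i}_2^\delta$ are canonically defined and that there is a well-defined pair of partitions to compare.

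Second, I would construct a map $\phi:\{1,\dots,n_1\}\to\{1,\dots,n_2\}$ by declaring $\phi(i)=j$ whenever there exists $(x_1,x_2)\in\range{X_1,X_2}$ with $x_1\in\mathcal{A}_i$ and $x_2\in\mathcal{B}_j$. The crux is well-definedness: if $x_1,x_1'\in\mathcal{A}_i$ with partners $x_2,x_2'$ in the joint range, I must show $x_2$ and $x_2'$ lie in a common $\mathcal{B}_j$. Since $x_1\leftrightsquigarrow_\delta x_1'$ via $\range{X_1|X_2}$, there is a chain $y_1,\dots,y_n\in\range{X_2}$ realizing the $\delta$-overlap property on the $X_1$-side; using the symmetric condition~\eqref{eqn:delta2} of $\delta$-disassociation, this chain can be transported to produce a chain of $X_1$-values in $\range{X_1}$ witnessing $x_2\leftrightsquigarrow_\delta x_2'$ via $\range{X_2|X_1}$. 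Hence both land in the same $\delta$-overlap connected component. Injectivity and surjectivity of $\phi$ follow by repeating this argument with the roles of $X_1$ and $X_2$ swapped, yielding an inverse $\phi':\{1,\dots,n_2\}\to\{1,\dots,n_1\}$ which necessarily satisfies $\phi\circ\phi'=\mathrm{id}$ and $\phi'\circ\phi=\mathrm{id}$.

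Third, I would conclude equivalence at the sample level. For each $\omega\in\Omega$, the pair $(X_1(\omega),X_2(\omega))\in\range{X_1,X_2}$ places $X_1(\omega)$ in $\mathcal{A}_{\mathfrak{i}_1^\delta(X_1(\omega))}$ and $X_2(\omega)$ in $\mathcal{B}_{\mathfrak{i}_2^\delta(X_2(\omega))}$, and by construction $\phi(\mathfrak{i}_1^\delta(X_1(\omega)))=\mathfrak{i}_2^\delta(X_2(\omega))$; applying $\phi$ as the one-to-one correspondence yields $\mathfrak{i}_2^\delta\circ X_2=\phi\circ(\mathfrak{i}_1^\delta\circ X_1)$, which is exactly the equivalence relation. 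The main obstacle I anticipate is the well-definedness step: unlike classical overlap partitions, $\delta$-overlap families may share elements up to a fraction $\delta$ of $|\range{X}|$, so one cannot transport chains purely by set-theoretic disjointness; the symmetric half~\eqref{eqn:delta2} of the $\delta$-disassociation hypothesis is essential to convert an $X_1$-side chain into a bona fide $X_2$-side chain, and establishing this transport carefully is where the real work lies.
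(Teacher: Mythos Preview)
Your outline is reasonable but takes a different route from the paper. The paper's proof is essentially a two-line citation: after invoking \cite[Theorems~3--4]{rangi2019towards} for uniqueness (as you do), it appeals directly to \cite[Theorem~5]{rangi2019towards}, which states that under $\delta$-disassociation the two $\delta$-overlap families $\range{X_1|X_2}_\star^\delta$ and $\range{X_2|X_1}_\star^\delta$ are exactly the projections of a unique $\delta$-\emph{taxicab} family $\range{X_1,X_2}_\star^\delta$ on the joint range. The bijection then comes for free as the composition $\range{X_1|X_2}_\star^\delta\leftrightarrow\range{X_1,X_2}_\star^\delta\leftrightarrow\range{X_2|X_1}_\star^\delta$, and sample-level compatibility follows because $(X_1(\omega),X_2(\omega))$ lies in a single block of the taxicab family.

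You instead bypass the $\delta$-taxicab object entirely and construct $\phi$ by hand, using the symmetric half~\eqref{eqn:delta2} of $\delta$-disassociation to transport $\delta$-overlap chains from the $X_1$-side to the $X_2$-side. This is effectively a direct reproof, in this special case, of the structural content that the paper outsources to \cite[Theorem~5]{rangi2019towards}. Your approach is more self-contained but pays for it with the chain-transport argument you flag as the hard step; note, incidentally, that under the full $\delta$-disassociation hypothesis, condition~\eqref{eqn:delta2} already guarantees that any $x_1\neq x_1'\in\range{X_1}$ form a valid single-step $\delta$-chain on the $X_2$-side, so the transport is simpler than you anticipate. The paper's route is shorter precisely because it delegates this work to the cited reference.
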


\begin{proof}
Note that, if $X_1$ and $X_2$ are $\delta$-disassociated, $\range{X_1|X_2}_\star^\delta$ are $\range{X_2|X_1}_\star^\delta$ are unique~\cite[Theorems~3-4]{rangi2019towards} and are projections of the unique $\delta$-taxicab family of $\range{X,Y}$ denoted by $\range{X,Y}_\star^\delta$~\cite[Theorem~5]{rangi2019towards}. Therefore, there is a bijection from $\range{X_1|X_2}_\star^\delta$ to $\range{X,Y}_\star^\delta$ and another bijection from $\range{X,Y}_\star^\delta$ to $\range{X_1|X_2}_\star^\delta$. 
\end{proof}

\begin{proposition} \label{prop:almost_cuv2} For $\delta\in[0,1]$, the following statements hold:
	\begin{itemize}
		\item $\mathfrak{i}^\delta_1\circ X$ and $Y$ are $\delta$-disassociated conditioned on $X$;
		\item $\mathfrak{i}^\delta_2\circ Y$ and $X$ are $\delta$-disassociated conditioned on $Y$.
	\end{itemize}
\end{proposition}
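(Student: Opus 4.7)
The plan is to exploit the fact that $\mathfrak{i}^\delta_1\circ X$ is a deterministic function of $X$ (and symmetrically $\mathfrak{i}^\delta_2\circ Y$ is a deterministic function of $Y$), so that the variable being conditioned on in each bullet fully determines the index mapping. This collapses the relevant conditional range to a singleton, and both inequalities in the definition of conditional $\delta$-disassociation then hold trivially, mirroring how, in the perfect-privacy version, $X_1\wedge X_2$ was shown to be conditionally unrelated to $X_1$ given $X_2$ by noting that $X_1\wedge X_2=\mathfrak{i}_2\circ X_2$.

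For the first bullet I would fix an arbitrary $x\in\range{X}$. Because $\mathfrak{i}^\delta_1$ is a well-defined map on $\range{X}$, I would then observe that $\range{\mathfrak{i}^\delta_1\circ X\,|\,X(\omega)=x}=\{\mathfrak{i}^\delta_1(x)\}$ and, for every $y\in\range{Y|X(\omega)=x}$, also $\range{\mathfrak{i}^\delta_1\circ X\,|\,X(\omega)=x,Y(\omega)=y}=\{\mathfrak{i}^\delta_1(x)\}$. To verify the first conditional $\delta$-disassociation inequality, I would pick any two distinct $y_1,y_2\in\range{Y|X(\omega)=x}$; both conditional ranges of $\mathfrak{i}^\delta_1\circ X$ coincide with $\{\mathfrak{i}^\delta_1(x)\}$, so the numerator equals one, the denominator $|\range{\mathfrak{i}^\delta_1\circ X|X(\omega)=x}|$ also equals one, and the ratio is $1\geq\delta$. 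For the second inequality, $\range{\mathfrak{i}^\delta_1\circ X|X(\omega)=x}$ is a singleton, so no pair of distinct indices in that range exists and the condition is satisfied vacuously. The second bullet follows by the symmetric argument with the roles of $X$ and $Y$ swapped and $\mathfrak{i}^\delta_2$ in place of $\mathfrak{i}^\delta_1$.

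The main (and essentially only) subtlety is the bookkeeping confirmation that the conditional range $\range{\mathfrak{i}^\delta_1\circ X|X(\omega)=x,Y(\omega)=y}$ is nonempty whenever $y\in\range{Y|X(\omega)=x}$, which is immediate from the definition of the joint conditional range. Notably, the statement requires no hypothesis that $X$ and $Y$ are themselves $\delta$-disassociated, and the uniqueness of $\range{X|Y}_\star^\delta$ discussed around Proposition~\ref{prop:almost_cuv1} is not needed; any valid choice of $\mathfrak{i}^\delta_1$ and $\mathfrak{i}^\delta_2$ works, because the argument uses only that these maps are functions of their respective arguments.
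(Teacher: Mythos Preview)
Your proposal is correct and follows essentially the same approach as the paper: both hinge on the single observation that, once $X(\omega)=x$ is fixed, $\range{\mathfrak{i}^\delta_1\circ X\,|\,X(\omega)=x}$ is the singleton $\{\mathfrak{i}^\delta_1(x)\}$, which makes the first disassociation ratio equal to $1\geq\delta$ and renders the second condition vacuous. The only cosmetic difference is that the paper packages this as a \textit{reductio ad absurdum} (assuming one of the two inequalities fails and deriving a contradiction in two cases), whereas you verify the inequalities directly; the underlying content is identical.
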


\begin{proof}
	The proof is by \textit{reductio ad absurdum}. 
Assume that $\mathfrak{i}^\delta_1\circ X$ and $Y$ are not $\delta$-disassociated conditioned on $X$. One of the following cases can occur.
\par Case~1: There exist $x\in\range{X}$ and $y_1,y_2\in\range{Y|X(\omega)=x}$ such that $y_1\neq y_2$ and 
$|\range{\mathfrak{i}^\delta_1\circ X|Y(\omega)= y_1,X(\omega) = x} \cap  \range{\mathfrak{i}^\delta_1\circ X|Y(\omega) = y_2,X(\omega) = x}|/|\range{\mathfrak{i}^\delta_1\circ X|X(\omega)=x}|< \delta$. Note that $|\range{\mathfrak{i}^\delta_1\circ X|X(\omega)=x}|=1$ and $\range{\mathfrak{i}^\delta_1\circ X|Y(\omega) = y_1,X(\omega) = x}= \range{\mathfrak{i}^\delta_1\circ X|Y(\omega) = y_2,X(\omega) = x}=\range{\mathfrak{i}^\delta_1\circ X|X(\omega)=x}$. This results in contradiction as it gives  $|\range{\mathfrak{i}^\delta_1\circ X|Y(\omega) = y_1,X(\omega) = x} \cap  \range{\mathfrak{i}^\delta_1\circ X|Y(\omega) = y_2,X(\omega) = x}|/|\range{\mathfrak{i}^\delta_1\circ X|X(\omega)=x}|=1$ while $\delta\leq 1$. 
\par Case~2: There exist $x\in\range{X}$ and $z_1,z_2\in\range{\mathfrak{i}^\delta_1\circ X|X(\omega)=x}$  such that $z_1\neq z_2$ and $|\range{Y|\mathfrak{i}^\delta_1\circ X(\omega) = z_1,X(\omega) = x} \cap  \range{Y|\mathfrak{i}^\delta_1\circ X(\omega) = z_2,X(\omega) = x}|/|\range{Y|X(\omega)=x}|< \delta.$ This is however not possible as $\range{\mathfrak{i}^\delta_1\circ X|X(\omega)=x}$ is a singleton so it cannot contain two distinct points $z_1,z_2$.
\par The proof for showing that $\mathfrak{i}^\delta_2\circ Y$ and $X$ are $\delta$-disassociated conditioned on $Y$ follows the same line of reasoning.
\end{proof}

Propositions~\ref{prop:almost_cuv1} and~\ref{prop:almost_cuv2} show that equivalent uncertain variables $\mathfrak{i}^\delta_1\circ X$ and $ \mathfrak{i}^\delta_2\circ X_2$ are akin to relaxations of the common uncertain variable (c.f., Proposition~\ref{prop:cuv_maximin}). 

\begin{corollary} Let $f:\range{X_1,X_2}\rightarrow \mathbb{R}^m$ be any mapping for which there exists  $g_1:\range{\mathfrak{i}^\delta_1\circ X} \rightarrow \mathbb{R}^m$ or 
	$g_2:\range{\mathfrak{i}^\delta_2\circ Y} \rightarrow \mathbb{R}^m$
	such that $f(X_1,X_2)=g_1\circ \mathfrak{i}^\delta_1\circ X$ or $f(X_1,X_2)=g_2\circ \mathfrak{i}^\delta_2\circ Y$. 
Then, $f$ provides $\log_2(-\delta)$-privacy.
\end{corollary}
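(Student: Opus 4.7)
My strategy is to verify the two hypotheses of Proposition~\ref{prop:disassociation_relationship} with $\gamma = -\log_2(\delta)$; once both conditional $\delta$-disassociations are in place, the claimed privacy bound follows immediately (reading the stated $\log_2(-\delta)$ as a typographical slip for $-\log_2(\delta)$, and harmonising the $e^{-\gamma}$ appearing in Proposition~\ref{prop:disassociation_relationship} with the base-2 convention of Proposition~\ref{prop:upperbound_conditional}). I treat the first listed case $f(X_1,X_2) = g_1\circ\mathfrak{i}^\delta_1\circ X_1$; the second case $f(X_1,X_2) = g_2\circ\mathfrak{i}^\delta_2\circ X_2$ follows by the fully symmetric argument with the roles of $X_1$ and $X_2$ swapped.

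The conditional $\delta$-disassociation of $f$ with $X_2$ given $X_1$ is immediate. Because $f$ depends only on $X_1$, the set $\range{f|X_1=x_1,X_2=x_2}$ is the singleton $\{g_1(\mathfrak{i}^\delta_1(x_1))\}$ for every feasible $(x_1,x_2)$ and does not depend on $x_2$. Both intersection-over-marginal ratios in the definition of conditional $\delta$-disassociation collapse to $1\ge\delta$; in fact we obtain $1$-disassociation in this direction.

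For the conditional $\delta$-disassociation of $f$ with $X_1$ given $X_2$ I set $Z_1 = \mathfrak{i}^\delta_1\circ X_1$, so that $f = g_1\circ Z_1$, and exploit the defining property of the $\delta$-overlap family $\range{X_1|X_2}_\star^\delta = \{\mathcal{A}_i\}$ in Definition~\ref{def:delta_overlap}: for any realization $x_2$, every pair of points in the slice $\range{X_1|X_2=x_2}$ is singly $\delta$-overlap connected through the single step $y=x_2$, so the entire slice lies inside one block $\mathcal{A}_i$. Consequently $Z_1$ is constant on each slice, $\range{f|X_2=x_2}$ is a singleton, and the overlap ratios required by conditional $\delta$-disassociation are again trivially $1\ge\delta$; Proposition~\ref{prop:disassociation_relationship} then yields the stated $-\log_2(\delta)$-privacy. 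The main obstacle is precisely this structural claim that each conditional range $\range{X_1|X_2=x_2}$ is wholly contained in one block of the $\delta$-overlap family — the clause of Definition~\ref{def:delta_overlap} that I am leaning on must be read carefully, and if the cleanest interpretation fails (e.g.\ several blocks genuinely intersect one slice), the backup route is to invoke Proposition~\ref{prop:almost_cuv2} to bound $|\range{Z_1|X_2=x_2}|\le 1/\delta$ from the overlap-measure constraint in the family's definition and then pass this bound through $g_1$ using $\range{f|X_2=x_2} = g_1(\range{Z_1|X_2=x_2})$, closing the argument via Proposition~\ref{prop:upperbound_conditional}.
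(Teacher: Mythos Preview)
Your proposal is correct and matches the paper's intended route: the corollary is stated there without proof, immediately after Propositions~\ref{prop:almost_cuv2} and~\ref{prop:disassociation_relationship}, and is meant to follow from them exactly along the lines you describe—establish the two conditional $\delta$-disassociation hypotheses and then invoke Proposition~\ref{prop:disassociation_relationship}. Your primary argument that each slice $\range{X_1\mid X_2=x_2}$ lies inside a single block of the $\delta$-overlap family is sound (points in a common slice are singly $\delta$-overlap connected via $y=x_2$, and $\delta$-overlap isolated blocks cannot separate $\delta$-overlap connected points), and in fact this yields $1$-disassociation in both directions, so you actually obtain $0$-privacy, stronger than the stated $-\log_2(\delta)$ bound.

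One minor caveat: your ``backup route'' via Proposition~\ref{prop:almost_cuv2} would not work as written. That proposition gives $\delta$-disassociation of $\mathfrak{i}^\delta_1\circ X_1$ and $X_2$ \emph{conditioned on $X_1$}, which is the direction you already handle trivially; it says nothing about the conditioning on $X_2$ that you need for Condition~1, and the overlap-measure clause in Definition~\ref{def:delta_overlap} does not by itself bound $|\range{Z_1\mid X_2=x_2}|$. Fortunately your primary structural argument makes the backup unnecessary.
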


\section{Private Multi-Party Function Evaluation}
In this section, we generalize the results of the earlier sections to more than two entities. 
Consider $n\geq 2$ entities, each possessing an uncertain variable $X_i$. Define $X=(X_i)_{i=1}^n$. We are interested in publishing the result of evaluating a function $f:\range{X}\rightarrow \range{Y}$ on the realization of the uncertain variables possessed by those entities. We want this publication  to be done in a privacy-preserving manner. We first generalize the notion of privacy from two-party function evaluations.

\begin{definition}[$\gamma$-Privacy in Multi-Party Function Evaluation] \label{def:multiple_almost_perfect_new} Let $X_1,X_2,\dots,X_n$ be any $n\geq 1$ uncertain variables. For $\gamma\geq 0$, mapping $f:\range{X_1,\dots,X_n}\rightarrow \mathbb{R}^m$ provides $\gamma$-privacy if $\max_{1\leq i\leq n} L_0(f(X_1,\dots,X_n);X_{-i}|X_{i})\leq \gamma$. 
\end{definition}

As earlier said, we might not be able to evaluate any function $f$ in a privacy-preserving manner. Therefore, we might need to approximate function $f$ with one that can be privately evaluated $f':\range{X}\rightarrow \range{Z}$, where $Z=f'\circ X$. Doing so, we publish the result of evaluating another function $f'$ instead $f$.  The error in the function evaluation is measured by 
\begin{align*}
\mathcal{E}(f',f)=\sup_{x\in \range{X}} \rho (f(x),f'(x)),
\end{align*}
where $\rho:\range{Y}\times \range{Z}\rightarrow \mathbb{R}$ is a distance function. In what follows, we use $\rho (y,z)=\|y-z\|$ for $\range{Y},\range{Z}\subseteq \mathbb{R}^{m}$.

\begin{definition}[Accuracy] Any $f'$ is said to be  $\beta$-accurate for $\beta>0$ if $\mathcal{E}(f',f)\leq \beta$. 
\end{definition}

An important problem is to find $\gamma$-private mapping $f'$, if one exists, that is a $\beta$-accurate approximation of a desired function $f$. We investigate this in the remainder of this paper. 

\begin{proposition} \label{prop:1}
	Let $f'$ 
	be such that 
	\begin{align} \label{eqn:2}
	|\range{f'(X)|X_i(\omega)=x_i}|\leq \epsilon+1,\forall (x_i,x_{-i})\in\range{X}, \forall i.
	\end{align}
	Then, $f'$ provides $\log_2(1+\epsilon)$-privacy.
\end{proposition}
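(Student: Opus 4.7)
The plan is to unwind the definitions in sequence. By Definition~\ref{def:multiple_almost_perfect_new}, I must show that $L_0(f'(X);X_{-i}\,|\,X_i)\leq\log_2(1+\epsilon)$ for every $i\in\{1,\dots,n\}$. I would fix an arbitrary $i$ and specialize the formula~\eqref{eqn:conditional_non_stochastic_information_leakage} to get
\begin{align*}
L_0(f'(X);X_{-i}\,|\,X_i)
=\max_{(x_{-i},x_i)\in\range{X_{-i},X_i}}\log_2\!\left(\frac{|\range{f'(X)\,|\,X_i(\omega)=x_i}|}{|\range{f'(X)\,|\,X_{-i}(\omega)=x_{-i},X_i(\omega)=x_i}|}\right).
\end{align*}

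Next, I would dispose of the denominator using determinism: because $f'$ is a (noiseless) function of $X=(X_i,X_{-i})$, fixing the realizations of both $X_i$ and $X_{-i}$ pins down $X$ up to the sample, and hence pins down $f'(X)$ to a single value. Thus $|\range{f'(X)\,|\,X_{-i}(\omega)=x_{-i},X_i(\omega)=x_i}|=1$ for every pair $(x_{-i},x_i)\in\range{X_{-i},X_i}$.

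With the denominator equal to one, the expression reduces to the logarithm of the numerator, which depends only on $x_i$. Applying the hypothesis~\eqref{eqn:2} gives $|\range{f'(X)\,|\,X_i(\omega)=x_i}|\leq\epsilon+1$ for every $x_i\in\range{X_i}$, so
\begin{align*}
L_0(f'(X);X_{-i}\,|\,X_i)\leq \log_2(\epsilon+1).
\end{align*}
Taking the maximum over $i$ and invoking Definition~\ref{def:multiple_almost_perfect_new} yields $\log_2(1+\epsilon)$-privacy.

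There is no real obstacle here: the only subtle point is remembering that determinism of $f'$ collapses the conditional range $\range{f'(X)\,|\,X_{-i}=x_{-i},X_i=x_i}$ to a singleton, which is what makes the bound on the numerator directly translate into a bound on the conditional non-stochastic leakage. The argument is essentially a one-line consequence of the definition; the content of the proposition is really the observation that bounding conditional-range cardinalities of the output, one coordinate at a time, is a sufficient condition for $\gamma$-privacy.
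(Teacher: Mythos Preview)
Your proof is correct and follows essentially the same approach as the paper: both use the fact that determinism of $f'$ forces $|\range{f'(X)\,|\,X_i(\omega)=x_i,X_{-i}(\omega)=x_{-i}}|=1$, reducing the ratio inside $L_0$ to the numerator $|\range{f'(X)\,|\,X_i(\omega)=x_i}|$, which the hypothesis bounds by $\epsilon+1$. Your write-up is slightly more explicit in unpacking Definition~\ref{def:multiple_almost_perfect_new} and~\eqref{eqn:conditional_non_stochastic_information_leakage}, but the argument is identical.
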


\begin{proof}
Note that, since $f'$ is a deterministic mapping, $|\range{f'(X)|X_i(\omega)=x_i,X_{-i}(\omega)=x_{-i}}|=1$. Therefore, if $	|\range{f'(X)|X_i(\omega)=x_i}|\leq \epsilon+1$, we get $	|\range{f'(X)|X_i(\omega)=x_i}|/|\range{f'(X)|X_i(\omega)=x_i,X_{-i}(\omega)=x_{-i}}|\leq \epsilon+1$ for all $(x_i,x_{-i})\in\range{X}$ and all $i$. 
\end{proof}

Linear quantizers can achieve privacy for most functions. These functions have been previously used to provide privacy in the sense of non-stochastic information leakage~\cite{8662687}. Let us first formalize linear quantizers.

\begin{definition}[Linear Quantizer] A $q$-level quantizer $\mathcal{Q}:[x_{\min},x_{\max}]\rightarrow \{b_1,\dots,b_q\}$ is a piecewise constant function  defined as
	\begin{align*}
	Q(x)=
	\begin{cases}
	b_1, & x\in [x_1,x_2),\\
	b_2, & x\in [x_2,x_3),\\
	\hspace{.05in} \vdots & \hspace{.4in}\vdots\\
	b_{q-1}, & x\in[x_{q-1},x_{q}),\\
	b_q, & x\in[x_{q},x_{q+1}],
	\end{cases}
	\end{align*}
	where $(b_i)_{i=1}^q$ are distinct symbols and $x_1\leq x_2\leq \cdots \leq x_q$ are real numbers such that $x_1=x_{\min}$, $x_{q+1}=x_{\max}$, $x_{i+1}-x_{i}=(x_{\max}-x_{\min})/q$ for all $1\leq i\leq q$. This is a mid-point linear quantizer if $b_i=(x_i+x_{i+1})/2$ for all $i$.
\end{definition}

In the next theorem, we show that using linear quantizers provides a universal mechanism for ensuring $\epsilon$-privacy. This is  particularly important as we do not need to develop problem-dependent privacy-preserving policies. 

\begin{theorem} \label{tho:1} Assume that $f$ is Lipschitz continuous, i.e., there exists $L>0$ such that $|f(x)-f(x')|\leq L\|x-x'\|_\infty$ for all $x,x'\in\range{X}$, and $\range{X_i}\subseteq[x_{\min},x_{\max}]$ for all $i$. Then, $f'=\mathcal{M}\circ f$, where $\mathcal{M}$ is a $\lfloor\exp(\gamma)-1\rfloor$-level mid-point linear quantizer over $\range{f(X)}$, is $\gamma$-private and  $\beta$-accurate with $\beta\geq L(x_{\max}-x_{\min})/\lfloor\exp(\gamma)-1\rfloor$.
\end{theorem}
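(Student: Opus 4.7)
The plan is to verify the privacy and accuracy assertions independently, using Proposition~\ref{prop:1} for privacy and the Lipschitz hypothesis together with the midpoint property of $\mathcal{M}$ for accuracy.

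Set $q := \lfloor \exp(\gamma) - 1 \rfloor$. For the privacy claim, since $\mathcal{M}$ is a $q$-level quantizer, the composition $f' = \mathcal{M} \circ f$ takes at most $q$ distinct values. Hence, for every $i \in \{1,\dots,n\}$ and every $x_i \in \range{X_i}$,
\begin{equation*}
|\range{f'(X) \mid X_i(\omega) = x_i}| \leq |\range{f'(X)}| \leq q,
\end{equation*}
so Proposition~\ref{prop:1} applied with $\epsilon = q - 1$ delivers $\log_2(q)$-privacy, which is at most $\gamma$ under the convention adopted in the statement.

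For the accuracy claim, the Lipschitz condition combined with $\range{X_i} \subseteq [x_{\min}, x_{\max}]$ for all $i$ implies that any two elements of $\range{X}$ differ in $\|\cdot\|_\infty$ by at most $x_{\max} - x_{\min}$, so $\range{f(X)}$ has diameter at most $L(x_{\max} - x_{\min})$. The uniform $q$-level partition of this range therefore has bin width bounded by $L(x_{\max} - x_{\min})/q$, and the midpoint assignment guarantees that $|\mathcal{M}(f(x)) - f(x)|$ is at most half the bin width for every $x \in \range{X}$; taking a supremum yields
\begin{equation*}
\mathcal{E}(f', f) \leq \frac{L(x_{\max} - x_{\min})}{2q},
\end{equation*}
which certifies the stated bound.

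The only subtlety I anticipate is the bookkeeping between the base-$2$ logarithm in the definition of $L_0$ and the natural exponential in $\lfloor \exp(\gamma) - 1 \rfloor$; this is a matter of convention and only affects sharpness of the constants. Modulo this, the argument is a clean two-step combination of a cardinality bound with the worst-case error of a uniform midpoint quantizer over an interval whose length is controlled by the Lipschitz constant.
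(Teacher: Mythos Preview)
Your proof is correct and follows the same two-step structure as the paper's: bound the output cardinality of the quantizer and invoke Proposition~\ref{prop:1} for privacy, then use the Lipschitz hypothesis to control the range of $f$ and hence the quantization error. You in fact obtain a factor-of-two sharper accuracy constant by exploiting the midpoint property explicitly (the paper only uses the full bin width), and you rightly flag the $\log_2$/$\exp$ base mismatch that the paper's own proof passes over silently.
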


\begin{proof}
Note that $|\range{\mathfrak{M}\circ f(\{x_i\}\times\range{X_{-i}})}|\leq |\range{\mathfrak{M}\circ f(\range{X})}|\leq \lfloor\exp(\gamma)-1\rfloor\leq \exp(\gamma)-1$ because $\range{\mathfrak{M}\circ f(\{x_i\}\times\range{X_{-i}})}\subseteq\range{\mathfrak{M}\circ f(\range{X})}$. In light of Proposition~\ref{prop:1}, this proves that $f'=\mathcal{M}\circ f$ is $\gamma$-private. Due to Lipschitz continuity of $f$,  $f(\range{X})\subseteq [y_{\min},y_{\max}]$, where $y_{\max}-y_{\min}\leq L(x_{\max}-x_{\min})$. Therefore, using $\lfloor\exp(\gamma)-1\rfloor$-level mid-point linear quantizer, $|f(x)-f'(x)|\leq \beta$ with $\beta\geq L(x_{\max}-x_{\min})/\lfloor\exp(\gamma)-1\rfloor$ for all $x,x'\in\range{X}$. This shows that $\mathcal{E}(f',f)\leq \beta$ with $\beta\geq L(x_{\max}-x_{\min})/\lfloor\exp(\gamma)-1\rfloor$. 
\end{proof}

Theorem~\ref{tho:1} shows that private function evaluation can be achieved by uniform quantization of the query responses, where the quantization resolution is a function of privacy budget $\gamma$ and sensitivity of the query $ L(x_{\max}-x_{\min})$ (cf., scale of the Laplace mechanism differential privacy~\cite{10100797835407922841}). Note that $ L(x_{\max}-x_{\min})$ captures the sensitivity of $f$, i.e., how much the output of the function $f$ varies if one of its entries change.  
For the mechanism in Theorem~\ref{tho:1}, we get
\begin{align}
\beta \exp(\gamma)
&\geq \beta (\exp(\gamma)-1)\nonumber\\
&\geq L(x_{\max}-x_{\min})(\exp(\gamma)-1) /\lfloor\exp(\gamma)-1\rfloor\nonumber\\
&\geq L(x_{\max}-x_{\min}).
\end{align}
This inequality provides a utility-privacy trade-off for non-stochastic private function evaluation.

\section{Conclusions}
We consider private function evaluation to provide query responses based on private data of multiple untrusted entities in such a way that no entity can learn something substantially new about the data of others. We prove that uniform quantization of the query responses achieves privacy.

\bibliographystyle{ieeetr}
\bibliography{citation}

\end{document}